\newtheorem{lem}{Lemma}[section]
\newtheorem{coro}[lem]{Corollary}
\newtheorem{thm}[lem]{Theorem}
\title{An Efficient Approximation Algorithm for the Steiner Tree Problem}
\author{Chi-Yeh~Chen 
\\ Department of Computer Science and Information
Engineering, \\ National Cheng Kung University, \\
Taiwan, ROC. \\
chency@csie.ncku.edu.tw.}
\begin{document}

\maketitle
\begin{abstract}
The Steiner tree problem is one of the classic and most fundamental $\mathcal{NP}$-hard problems: given an arbitrary weighted graph, seek a minimum-cost tree spanning a given subset of the vertices (terminals). Byrka \emph{et al}. proposed a $1.3863+\epsilon$-approximation algorithm in which the linear program is solved at every iteration after contracting a component. Goemans \emph{et al}. shown that it is possible to achieve the same approximation guarantee while only solving hypergraphic LP relaxation once. However, optimizing hypergraphic LP relaxation exactly is strongly NP-hard. This article presents an efficient two-phase heuristic in greedy strategy that achieves an approximation ratio of $1.4295$. 
\begin{keywords}
Steiner trees, approximation algorithms, graph Steiner problem, network design.
\end{keywords}
\end{abstract}

\section{Introduction}\label{sec:Introduction}
The \textit{Steiner tree} problem is one of the classic and most fundamental $\mathcal{NP}$-hard problems. Given an arbitrary weighted graph with a distinguished vertex subset, the Steiner tree problem asks for a shortest tree spanning the distinguished vertices. 
This problem is widely used in many fields, such as VLSI routing~\cite{Kahng95}, wireless communications~\cite{Min2006,Liang:2002}, transportation~\cite{Hwang92}, wirelength estimation~\cite{Caldwell98}, and network routing~\cite{Korte90}.
The Steiner tree problem is $\mathcal{NP}$-hard even in the very special cases of Euclidean or rectilinear metrics~\cite{garey02}.
In fact, it is $\mathcal{NP}$-hard to approximate the Steiner tree problem within a factor $96/95$~\cite{Chlebik08}.
Hence, an approximation algorithm with a small and provable guarantee is thirsted by researchers.
Recall that an $\alpha$-approximation algorithm for a minimization problem is a polynomial-time algorithm that finds approximate solutions to $\mathcal{NP}$-hard optimization problems with cost at most $\alpha$ times the optimum value.

Arora~\cite{Arora98} established that Euclidean and rectilinear minimum-cost Steiner trees can be approximated in polynomial time within a factor of $1 + \epsilon$ for any constant $\epsilon>0$. For arbitrary weighted graphs, a sequence of improved approximation algorithms appeared in the literatures~\cite{Takahashi80, Zel93, Berman94, Zel96, Pro97, Karpinski97, Hougardy99, Robins05, Byrka13} and the best approximation ratio achievable within polynomial time was improved from 2 to 1.39.

Byrka \emph{et al}. proposed an LP-based approximation algorithm that achieves approximation ratio of $\ln4+\epsilon$ for general graphs~\cite{Byrka13}. However, the linear program is solved at every iteration after contracting a component. Goemans \emph{et al}.~\cite{Goemans2012} shown that it is possible to achieve the same approximation guarantee while only solving hypergraphic LP relaxation once. However, optimizing hypergraphic LP relaxation exactly is strongly NP-hard~\cite{Goemans2012}. Borchers and Du~\cite{Borchers97} show that $\rho_k \leq 1+\left\lfloor \log_{2}k\right\rfloor^{-1}$ where $\rho_k$ is the worst-case ratio of the cost of optimal $k$-restricted Steiner tree to the cost of optimal Steiner tree. We may therefore choose $k=2^{1/\epsilon}$ appropriately to obtain a $1+\epsilon$ approximation to hypergraphic LP relaxation, for any $\epsilon>0$. The number of variables and constraints will consequently be more than $n^{2^{1/\epsilon}}$ where $n$ is the number of terminals~\cite{Feldmann2016}.
\section{Notation and Preliminaries}\label{sec:Preliminaries}
Given a graph $G=(V,E)$ with nonnegative edge costs (or weights) $cost:E\rightarrow \mathbb{R}^{+}$ and a subset $R \subseteq V$ of terminals of the vertices of $G$, the Steiner tree problem asks for a minimum-cost Steiner tree spanning $R$. 
Any tree in $G$ spanning $R$ is called a Steiner tree, and any non-terminal vertices contained in a Steiner tree are referred to as Steiner points.
The cost of a tree is the sum of its edge costs.
The graph $G$ is assumed to be a complete graph and let $G_R$ be a complete graph that induced by $R$. 

For any graph $H$, we denote by $MST(H)$ a minimum spanning tree of a graph $H$ and by $cost(H)$ the sum of the costs of all edges in $H$.
We thus abbreviate $mst(H)=cost(MST(H))$, i.e., the cost of a minimum spanning tree of $H$.

A \textit{terminal-spanning tree} is a Steiner tree that does not contain any Steiner points.
Let $mst$ be the cost of minimum terminal-spanning tree $MST(G_R)$.
A minimum-cost Steiner tree spanning subset $R'\subset R$ in which all terminals are leaves is called a \textit{full component}.
Any Steiner tree can be decomposed into full components by splitting all the non-leaf terminals~\cite{Robins05}.
Our algorithm will starts with a minimum-cost terminal spanning tree, and iteratively adds full components to improve it.
Any full component is assumed to have its own copy of each Steiner point so that full components chosen by our algorithm do not share Steiner points.

Let $\Gamma(K)$ be the terminal set of a given full component $K$.
Let $E_{0}(R')$ be the set of zero-cost edges in which all edges connect all pairs of terminals in $R'$. 
For brevity, let $E_{0}(H)=E_{0}(\Gamma(H))$. We call a Steiner tree $S$ is a \textit{well solution} if $\left|\Gamma(K_{i})\cap \Gamma(K_{j})\right|\leq 1$ for any two full components $K_{i}$ and $K_{j}$ in $S$.
Let $Loss(K)$ be the minimum-cost sub-forest of $K$. A simple method of computing $Loss(K)$ is given by the following lemma.
\begin{lem}\label{lem:lem-2}
{\rm\cite{Robins05}}. For any full component $K$, $Loss(K)=MST(K\cup E_0(K))-E_0(K)$.
\end{lem}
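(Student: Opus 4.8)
The plan is to identify both sides of the claimed identity with the minimum cost of a sub-forest of $K$ that joins every Steiner point to a terminal. Recall that $Loss(K)$ is a minimum-cost sub-forest of $K$ in which every connected component contains at least one terminal of $\Gamma(K)$ (equivalently, each Steiner point of $K$ is joined by $Loss(K)$ to a terminal). Let $K/\Gamma(K)$ denote the graph obtained from $K$ by contracting all terminals into a single vertex. A sub-forest of $K$ with the above property corresponds to a connected spanning subgraph of $K/\Gamma(K)$ of the same cost, and conversely a spanning tree of $K/\Gamma(K)$ lifts to such a sub-forest of $K$; hence $cost(Loss(K))=mst(K/\Gamma(K))$.

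First I would prove $mst(K\cup E_0(K))=mst(K/\Gamma(K))$. For ``$\le$'': take an optimal spanning tree of $K/\Gamma(K)$, lift it to a loss forest $F\subseteq K$, and adjoin to $F$ any spanning tree of the zero-cost clique $E_0(K)$ on $\Gamma(K)$. Since every component of $F$ meets $\Gamma(K)$ and the adjoined edges connect all of $\Gamma(K)$ at no cost, the union is a connected spanning subgraph of $K\cup E_0(K)$ of cost $cost(F)=mst(K/\Gamma(K))$. For ``$\ge$'': in an arbitrary spanning tree of $K\cup E_0(K)$, contract all of its $E_0(K)$-edges; this yields a connected spanning subgraph of $K/\Gamma(K)$ of no larger cost, because the contracted edges are free. (This is the familiar fact that contracting a vertex set joined by zero-cost edges does not change the minimum-spanning-tree cost.) Hence $mst(K\cup E_0(K))=cost(Loss(K))$.

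It then remains to check that the particular edge set $MST(K\cup E_0(K))-E_0(K)$ realizes this value and is itself a valid loss forest. It is a sub-forest of $K$: every edge of $K\cup E_0(K)$ not in $E_0(K)$ is an edge of the tree $K$. Its cost equals $cost(MST(K\cup E_0(K))-E_0(K))=cost(MST(K\cup E_0(K)))=mst(K\cup E_0(K))=cost(Loss(K))$, since the removed edges have cost zero. Finally, writing $T=MST(K\cup E_0(K))$, every component $C$ of $T-E_0(K)$ contains a terminal: if $C\neq V(K)$, connectivity of the tree $T$ forces some edge of $T$ to leave $C$, and that edge must lie in $E_0(K)$ (otherwise it would be inside $C$), so one of its endpoints, a terminal, is in $C$; and if $C=V(K)$ then $T$ spans and hence connects all terminals. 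Therefore $MST(K\cup E_0(K))-E_0(K)$ is a minimum-cost sub-forest of $K$ with every component containing a terminal, i.e.\ a valid $Loss(K)$, as claimed.

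The only point that needs a little care is that a minimum spanning tree of $K\cup E_0(K)$ need not be unique, so the statement has to be read as ``$MST(K\cup E_0(K))-E_0(K)$ is always a legitimate choice of $Loss(K)$''; making this precise is exactly the component-covering argument of the previous paragraph, and everything else reduces to routine properties of minimum spanning trees and of edge contraction.
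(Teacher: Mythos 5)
Your proof is correct. The paper does not prove this lemma itself but only cites Robins and Zelikovsky, and your argument (identifying $cost(Loss(K))$ with $mst(K/\Gamma(K))$ via the zero-cost terminal clique, then checking that $MST(K\cup E_0(K))-E_0(K)$ is a legitimate loss forest of that cost) is essentially the standard proof from that reference; you also rightly supply the full definition of $Loss(K)$ -- every Steiner point connected to a terminal -- which the paper states only elliptically.
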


We denote the cost of $Loss(K)$ by $loss(K)$. Let $\mathcal{C}[K]$ be a loss-contracted full component that can be obtained by collapsing each connected component of $Loss(K)$ into a single node.
We denote by $Opt_{k}$ an optimal $k$-restricted Steiner tree. Let $opt_{k}$ and $loss_{k}$ be the cost and loss of $Opt_{k}$, respectively. Let $opt$ be the cost of the optimal Steiner tree.
For brevity, this article uses $T/ E_{0}(R')$ to denote the minimum spanning tree of $T\cup E_{0}(R')$ for $R'\subset R$.

The \textit{gain} of a full component $K$ with respect to $T$ is defined as
\begin{eqnarray*}
gain_{T}(K) = cost(T)-mst(T\cup E_{0}(K))-cost(K),
\end{eqnarray*}
and the \textit{load} of of a full component $K$ with respect to $T$ is defined as
\begin{eqnarray*}
load_{T}(K) = cost(K)+mst(T\cup E_{0}(K))- cost(T).
\end{eqnarray*}
Let $\Psi_{T_{1},T_{2}}(K)=cost(T_{1})-cost(T_{2})-mst(T_{1}\cup E_{0}(K))+mst(T_{2}\cup E_{0}(K))$.
The following lemma shows that if no full component can improve a terminal-spanning tree $T$, then $cost(T)\leq opt_{k}$.
\begin{lem}\label{lem08}
{\rm\cite{Robins05}}. Let $T$ be a terminal-spanning tree; if $gain_{T}(K)\leq 0$ for any k-restricted full component $K$, then $cost(T)\leq opt_{k}$.
\end{lem}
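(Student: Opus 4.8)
The plan is to bound $cost(T)$ by $opt_k$ through a telescoping argument in which the full components of an optimal $k$-restricted Steiner tree are contracted into $T$ one at a time. First I would fix an optimal $k$-restricted Steiner tree $Opt_k$ and split it into its full components $C_1,\dots,C_t$, so that every $C_i$ is a $k$-restricted full component and $opt_k=\sum_{i=1}^{t}cost(C_i)$. Set $T_0=T$ and $T_i=T_{i-1}/E_0(C_i)$ for $i=1,\dots,t$; since repeatedly taking a minimum spanning tree and adding further edges gives the same cost as one minimum spanning tree of the whole union, $T_i=MST\!\big(T\cup\bigcup_{j\le i}E_0(C_j)\big)$. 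Because $Opt_k$ connects $R$ and $V(T)=R$, the zero-cost edges $\bigcup_{j=1}^{t}E_0(C_j)$ already connect every vertex of $T$, hence $cost(T_t)=0$. Telescoping then gives $cost(T)=cost(T_0)-cost(T_t)=\sum_{i=1}^{t}\big(cost(T_{i-1})-cost(T_i)\big)$, and by the definitions $cost(T_{i-1})-cost(T_i)=cost(T_{i-1})-mst(T_{i-1}\cup E_0(C_i))=gain_{T_{i-1}}(C_i)+cost(C_i)$.

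The crux is to pass from $gain_{T_{i-1}}(C_i)$, which refers to the already-contracted tree $T_{i-1}$ about which the hypothesis says nothing, to $gain_{T}(C_i)$, which the hypothesis controls; equivalently, I must show $\Psi_{T_{i-1},T}(C_i)\le0$, that is, that contracting terminals cannot raise a full component's gain. I would derive this from supermodularity of $mst$ viewed as a set function on edge sets: for any edge sets $A$ and $B$, $mst(A)+mst(B)\le mst(A\cap B)+mst(A\cup B)$. This follows from the identity $mst(X)=\int_{0}^{\infty}\!\big(\kappa(X_{<\tau})-1\big)\,d\tau$, where $X_{<\tau}$ denotes the edges of $X$ of weight less than $\tau$ and $\kappa$ is the number of connected components on the vertex set: the map $X\mapsto X_{<\tau}$ respects union and intersection, and $\kappa$ is supermodular because the graphic-matroid rank function is submodular, so the integrand, and hence the integral, is supermodular. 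Taking $A=T\cup\bigcup_{j<i}E_0(C_j)$ and $B=T\cup E_0(C_i)$, so that $mst(A)=cost(T_{i-1})$, $mst(A\cup B)=cost(T_i)$, and $mst(A\cap B)\le mst(T)=cost(T)$, this yields $cost(T_{i-1})-cost(T_i)\le cost(T)-mst(T\cup E_0(C_i))=gain_T(C_i)+cost(C_i)$.

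Finally I would sum the last inequality over $i$ and invoke the telescoping identity, obtaining $cost(T)\le\sum_{i=1}^{t}\big(gain_T(C_i)+cost(C_i)\big)$; since each $C_i$ is a $k$-restricted full component, the hypothesis gives $gain_T(C_i)\le0$, whence $cost(T)\le\sum_{i=1}^{t}cost(C_i)=opt_k$. The hard part is the supermodularity step $\Psi_{T_{i-1},T}(C_i)\le0$; the rest is bookkeeping. Its whole purpose is to compensate for the fact that non-positivity of the gain is postulated only for $T$, whereas the natural telescoping produces gains measured against the progressively contracted trees $T_{i-1}$, so one genuinely needs that contracting some terminals of $T$ can only weaken every full component's gain.
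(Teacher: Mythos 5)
The paper does not actually prove this lemma; it is quoted verbatim from Robins and Zelikovsky~\cite{Robins05}, so there is no in-paper argument to compare against. Your proof is correct and follows essentially the same strategy as the standard proof in that source: decompose $Opt_k$ into its full components, telescope the successive contractions of $T$, and control the cross terms by showing that contracting terminals of $T$ can only decrease a component's gain. The one genuine difference is how that key monotonicity step is justified: Robins--Zelikovsky use an explicit minimum-spanning-tree edge-exchange (longest-edge-on-a-path) argument, whereas you derive it from supermodularity of $mst$ as a set function on edge sets, via the level-set integral and submodularity of the graphic-matroid rank; your route is cleaner and somewhat more general, at the cost of invoking matroid machinery. (A minor point you gloss over: a full component of $Opt_k$ need not be a minimum-cost tree on its terminal set as the paper's definition of ``full component'' requires, but since replacing it by the minimum-cost one only increases the gain, the hypothesis $gain_T(C_i)\leq 0$ still applies.)
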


\section{Two-phase Algorithm}\label{sec:Algorithm3}
This section proposes a $k$-restricted two-phase heuristic ($k$-TPH) which is described in \textbf{Algorithm~\ref{Alg1}}. Let $T^t$ be the terminal-spanning tree at the end of iteration $t$ and let $K_t$ be the chosen full component at the end of iteration $t$. The first phase finds a terminal-spanning tree $T_{base}$ such that no full component can improve it. The terminal-spanning tree $T_{base}$ is a based criterion for the second phase. 
We denote by $S_{1}$ the solution in the first phase, and by $S_{2}$ the solution in the second phase. The first phase is a loss-contracting algorithm. The criterion function of $K$ with respect to $T^{t-1}$ is defined as
\begin{eqnarray*}
r=\frac{gain_{T^{t-1}}(K)}{loss(K)}.
\end{eqnarray*}

A chosen full component $K_i$ may be modified by other chosen full component. Assume that some edges $\left\{e_1,e_2,\ldots\right\}$ in $T^{t-1}$ that are corresponding to $\mathcal{C}[K_i]$ are deleted when adding $\mathcal{C}[K_t]$ to $T^{t-1}$. Some components are obtained by $K_i-\left\{e_1,e_2,\ldots\right\}$ and each component can be replaced by a full component with same terminals. The full component $K_i$ is replaced by these full components. That is because we want to ensure that $\frac{1}{2}\cdot cost(S_{1})\leq cost(T_{base})$.
If no edge in $T^{t-1}$ is corresponding to $\mathcal{C}[K_i]$, we keep a \textit{basic component} from $K_i$ that is a Steiner point directly connect to two terminals in which an edge belongs to $Loss(K_i)$ and another edge belongs to $K_i-Loss(K_i)$ (see Figure~\ref{fig3}). It guarantees that the chosen full components never be chosen again.  
However, it may bring that some Steiner points are leaves in $S_{1}$. Fortunately, these Steiner points can be removed. Therefore, this paper assume that no Steiner point is leaf in $S_{1}$.


\begin{figure}[t]
    \centering
        \includegraphics[width=3 in]{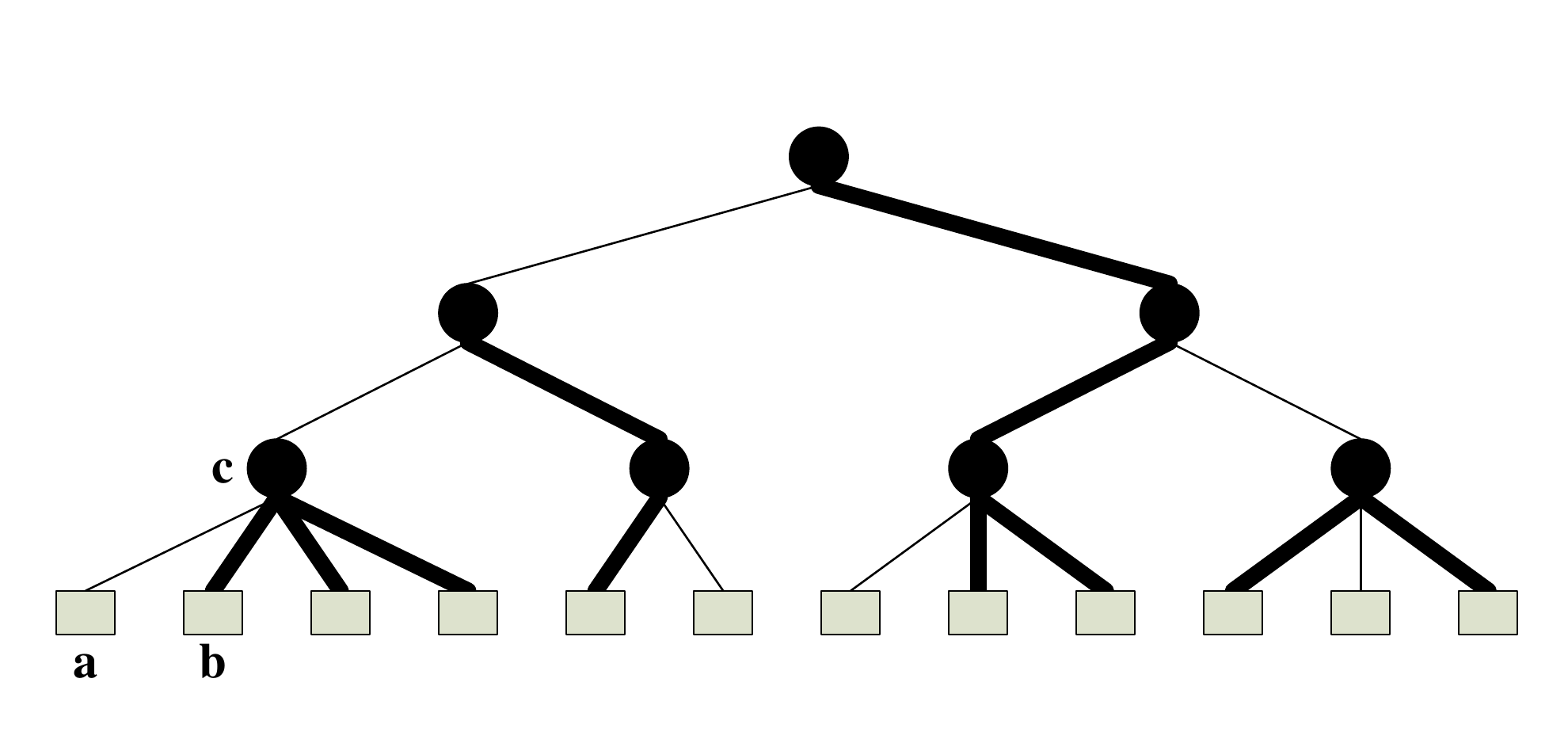} 
        \caption{A full component $K$: squares denote terminals, circles denote Steiner and bold black edges indicate $K-Loss(K)$. A subgraph $B=\left(\left\{a,b,c\right\},\left\{\left\{a,c\right\},\left\{b,c\right\}\right\}\right)$ is a basic component in $K$ where an edge $\left\{a,c\right\}$ belongs to $Loss(K)$ and another edge $\left\{b,c\right\}$ belongs to $K-Loss(K)$.}
    \label{fig3}
\end{figure}

The second phase calls the $k$-restricted enhanced relative greedy heuristic ($k$-ERGH), which is described in \textbf{Algorithm~\ref{Alg2}}, to obtain a Steiner tree $S_{2}$.
The $k$-ERGH iteratively finds a full component $K$ for modifying the terminal-spanning trees $T_{origin}^{0}=MST(G_{R})$ and $T_{base}^{0}$. When a full component $K_{t}$ has been chosen, the algorithm contracts the cost of the corresponding edges in $T_{origin}^{t-1}$ to zero, that is, $T_{origin}^{t}=MST(T_{origin}^{t-1}\cup E_{0}(K_{t}))$. Similarly, $T_{base}^{t}=MST(T_{base}^{t-1}\cup E_{0}(K_{t}))$.
The criterion function of $K$ with respect to $T_{origin}^{t-1}$ and $T_{base}^{t-1}$ is defined as
\begin{eqnarray*}
f(K)=\frac{load_{T_{base}^{t-1}}(K)}{\Psi_{T_{origin}^{t-1},T_{base}^{t-1}}(K)}.
\end{eqnarray*}

The following steps analyze the complexity of $k$-TPH. Recall that, $n$ is the number of terminals.
In the first phase, the number of iterations cannot exceed the number of full Steiner components $O(n^k)$.
The gain of a full component $K$ can be found in time $O(k)$ after precomputing the longest edges between any pair of nodes in the current minimum spanning tree, which may be accomplished in time $O(n \log n)$~\cite{Robins05}.
Thus, the runtime of all the iterations in the first phase can be bounded by $O(k n^{2k+1}\log n)$.
We also can obtain the runtime of all the iterations in the second phase is bounded by $O(k n^{2k+1}\log n)$.
Thus, the total runtime is $O(k n^{2k+1}\log n)$.
\begin{algorithm}
\caption{The $k$-restricted two-phase heuristic ($k$-TPH)}
    \begin{algorithmic}[1]
      \STATE \textbf{--------------------The first phase--------------------}
      \STATE $T^{0}=MST(G_{S})$
      \FOR{$t=1,2,\ldots$}
          \STATE Find a $k$-restricted full component $K_{t}=K$ with maximizes 
	        \[
                     r=\frac{gain_{T^{t-1}}(K)}{loss(K)}
          \]
          \IF{$r\leq 0$}
              \STATE $T_{base} = T^{t-1}$ and exit for-loop
          \ENDIF
          \IF{there exist some edges $\left\{e_1,e_2,\ldots\right\}\subseteq T^{t-1}-MST(T^{t-1}\cup E_0(K_t))$ and $\left\{e_1,e_2,\ldots\right\}\subseteq \mathcal{C}(K_i)$ for $i\neq t$}
          	\STATE Some components are obtained by $K_i-\left\{e_1,e_2,\ldots\right\}$ and each components can be replaced by a full component with same terminals.
          	\STATE Replaced the full component $K_i$ by these full components.
          	\STATE (for convenient to describe algorithm, we reuse the notain $K_i$ to represent these full components.)
          \ENDIF        
          \STATE $T^{t}=MST(T^{0}\cup \mathcal{C}[K_1]\cup \cdots \mathcal{C}[K_t])$
          \STATE $S_{1}=MST(T^{0}\cup K_1\cup \cdots \cup K_{t})$
          \IF{no edge in $T^{t}$ is corresponding to $\mathcal{C}[K_i]$ for $i\neq t$}  
            \STATE Keep a basic component from $K_i$.
            \STATE (we also reuse the notain $K_i$ to represent this basic component.)
          \ENDIF
      \ENDFOR
      \STATE \textbf{--------------------The second phase--------------------}
      \STATE $S_{2} = k\mbox{-ERGH}(T_{base})$
      \RETURN the minimum-cost tree $S$ between $S_{1}$ and $S_{2}$.
   \end{algorithmic}
\label{Alg1}
\end{algorithm}

\begin{algorithm}
\caption{The $k$-restricted enhanced relative greedy heuristic ($k$-ERGH)}
    \begin{algorithmic}[1]
      \REQUIRE $T_{base}$.
      \STATE $T_{base}^{0}=T_{base}$ and $T_{origin}^{0}=MST(G_{S})$
      \FOR{$t=1,2,\ldots$}
    	  \STATE Find a $k$-restricted full component $K_{t}=K$ which minimizes 
	\[
f(K)=\frac{load_{T_{base}^{t-1}}(K)}{\Psi_{T_{origin}^{t-1},T_{base}^{t-1}}(K)}
   \]
    	  \STATE $T_{origin}^{t}=MST(T_{origin}^{t-1}\cup E_{0}(K_{t}))$ 
    	  \STATE $T_{base}^{t}=MST(T_{base}^{t-1}\cup E_{0}(K_{t}))$
    	  \IF{$c(T_{origin}^{t}) = c(T_{base}^{t})$}
    	      \RETURN $MST(T_{origin}^{0}\cup K_1 \cup K_2 \cdots \cup K_t)$
    	  \ENDIF
    	\ENDFOR
   \end{algorithmic}
\label{Alg2}
\end{algorithm}

\section{Approximation ratio of the $k$-TPH}
This section shows the approximation result of the $k$-TPH. When a full component $K$ has been chosen, the following lemma shows that the first phase never choose the full component $K$ even it has been replaced by some full components.
\begin{lem}\label{lem09}
The first phase never choose the chosen full components again.
\end{lem}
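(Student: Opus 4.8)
The plan is to argue that once a full component $K$ has been chosen in the first phase, its gain with respect to all subsequent terminal-spanning trees is non-positive, so it can never again be the maximizer of the criterion $r = gain_{T^{t-1}}(K)/loss(K)$ (and in fact, as soon as a component with non-positive gain would be selected, the loop terminates). The key structural fact to exploit is the bookkeeping the algorithm performs in lines 8--13 and 16--19 of Algorithm~\ref{Alg1}: whenever $\mathcal{C}[K_t]$ is added, every previously chosen $K_i$ is either split along the deleted edges into sub-components (each replaced by a full component on the same terminal set) or, if none of its contracted edges survive in $T^t$, reduced to a \emph{basic component} $B$ consisting of one Steiner point joined to two terminals by one loss-edge and one non-loss edge. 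I would treat these two cases separately.

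\textbf{First I would handle the basic-component case.} For a basic component $B$ with terminal set $\{a,b\}$, the non-loss edge $\{b,c\}$ has some positive cost $w$ while the loss edge $\{a,c\}$ is (by Lemma~\ref{lem:lem-2}, since $Loss$ is the min-cost sub-forest) no longer than $w$; contracting $Loss(B)$ identifies $a$ with $c$, so $\mathcal{C}[B]$ is a single edge of cost $w$ between (the image of) $a$ and $b$. After $B$ was first created, the edge $\{a,b\}$ of cost $w$ is already present in the current terminal-spanning tree via $T^{t} = MST(T^0 \cup \mathcal{C}[K_1]\cup\cdots\cup\mathcal{C}[K_t])$, so $mst(T^{t'}\cup E_0(B))$ for any later $t'$ differs from $cost(T^{t'})$ by at most the weight of the single edge the contraction of $E_0(B)$ can remove, which is at most $w$. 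Hence $gain_{T^{t'}}(B) = cost(T^{t'}) - mst(T^{t'}\cup E_0(B)) - cost(B) \le w - w = 0$. Since $loss(B) > 0$, the ratio $r$ for $B$ is non-positive, so $B$ is never re-selected — and because $r\le 0$ triggers the exit in line~6, the first phase would in fact stop rather than pick it.

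\textbf{Next I would handle the split case.} Here $K_i$ is replaced by full components $K_i^{(1)},\dots,K_i^{(m)}$ obtained from $K_i - \{e_1,e_2,\dots\}$, each spanning a terminal subset $R^{(j)}$. The monotonicity of the contracted structure — each $\mathcal{C}[K_i^{(j)}]$ is a sub-forest of (a refinement of) the old $\mathcal{C}[K_i]$, all of whose edges are present in $T^t$ — gives that for each $j$, contracting $E_0(K_i^{(j)})$ into $T^{t'}$ (for $t' \ge t$) can delete edges of total weight at most $cost(\mathcal{C}[K_i^{(j)}])$, and by the same Loss-based accounting $cost(K_i^{(j)}) = cost(\mathcal{C}[K_i^{(j)}]) + loss(K_i^{(j)})$. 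Therefore $gain_{T^{t'}}(K_i^{(j)}) \le cost(\mathcal{C}[K_i^{(j)}]) - cost(K_i^{(j)}) = -loss(K_i^{(j)}) < 0$, so again these components have non-positive gain and cannot be re-chosen; the induction over $t$ closes because any newly created sub-component or basic component immediately satisfies the same inequality upon creation, and contracting further components only keeps all of its contracted edges present in the evolving tree.

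\textbf{The main obstacle} I anticipate is making precise and rigorous the claim that ``every edge of $\mathcal{C}[K_i]$ (or of the refined $\mathcal{C}[K_i^{(j)}]$) remains in $T^{t}$ for all later $t$'' — i.e., that contracting subsequent loss-contracted components never re-introduces a cheaper connection that would let $K_i$ regain positive gain. This requires a careful invariant stating that after each iteration the tree $T^t$ contains a spanning structure of each previously chosen component's terminals using exactly the contracted-component edges, together with an argument (via the matroid-exchange / MST cycle property used throughout~\cite{Robins05}) that adding zero-cost edges $E_0(K_t)$ to form $T^{t+1}$ only removes edges \emph{not} needed for that structure or replaces them by equally cheap ones. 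I would state this as an auxiliary invariant, prove it by induction on $t$ alongside the gain bound, and keep the case analysis (split vs.\ basic vs.\ untouched) parallel in the inductive step.
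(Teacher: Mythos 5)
Your proposal follows essentially the same route as the paper's proof: the same three-way case analysis (all edges of $\mathcal{C}[K]$ survive in the current tree, hence $gain\leq 0$; some edges deleted, hence $K$ is split and one argues about the replacement pieces; no edges survive, hence the basic-component bookkeeping), and the same underlying mechanism of showing that anything derived from a chosen component has non-positive gain, so the $r\leq 0$ exit condition prevents re-selection. One link is missing from your writeup that the paper supplies explicitly: the lemma asserts that the \emph{original} component $K$ is never chosen again, and after some edges of $\mathcal{C}[K]$ are deleted it is not immediate that $K$ itself (as opposed to its pieces) has non-positive gain or a dominated ratio. The paper closes this by the subadditivity/additivity step $gain_{T^{t'}}(K)\leq gain_{T^{t'}}(A)+gain_{T^{t'}}(B)$ together with $loss(K)=loss(A)+loss(B)$, giving $\frac{gain_{T^{t'}}(K)}{loss(K)}\leq\max\left\{\frac{gain_{T^{t'}}(A)}{loss(A)},\frac{gain_{T^{t'}}(B)}{loss(B)}\right\}$, and then argues $A$ and $B$ have non-positive gain because the superior replacements $K_A$, $K_B$ are already embedded in the tree; you should state this transfer explicitly rather than leave it inside ``the induction closes.'' As for the obstacle you flag (that the contracted edges of the current pieces persist in $T^{t}$, or are re-split when they do not), the paper treats this just as informally as you do, so that is not a gap relative to the paper's own standard of rigor.
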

\begin{proof}
Assume that the first phase choose a full component $K_t=K$. 
If $MST(T\bigcup_{i=1}^{t'}\mathcal{C}[K_i])$ contain all edges $e\in \mathcal{C}[K]$ in the iteration $t'>t$, $gain_{T}(K)\leq 0$ and the first phase never choose the full component $K$ again. If $MST(T\bigcup_{i=1}^{t'}\mathcal{C}[K_i])$ does not contain some edge $e\in \mathcal{C}[K]$ in the iteration $t'>t$, the edge $e$ has been improved by a chosen full component. The full component $K$ is divided into two components by removing the edge $e$.
Let $A$ and $B$ be two connected components of $K-\left\{e\right\}$. The full component $K_t$ is replaced by two full components $K_A$ and $K_B$ with terminals sets $\Gamma(A)$ and $\Gamma(B)$, respectively. We have $T^{t'}=MST(T\bigcup_{i=1}^{t-1}\mathcal{C}[K_i]\cup K_A \cup K_B \bigcup_{i=t+1}^{t'}C[K^i])$, $gain_{T^{t'}}(K)\leq gain_{T^{t'}}(A\cup B)\leq gain_{T^{t'}}(A)+gain_{T^{t'}}(B)$ and $loss(K)=loss(A)+loss(B)$. Finally,
\begin{eqnarray*}
\frac{gain_{T^{t'}}(K)}{loss(K)} & \leq & \frac{gain_{T^{t'}}(A))+gain_{T^{t'}}(B)}{loss(B)+loss(B)} \\
                     & \leq & \max\left\{\frac{gain_{T^{t'}}(A)}{loss(A)},\frac{gain_{T^{t'}}(B)}{loss(B)}\right\}.
\end{eqnarray*}
We show that $K_A$ never be replaced by $A$.
We knows that $cost(K_A)\leq cost(A)$ and $gain_{T^{t'}}(K_A)\leq 0$. The full component $K_A$ is superior to $A$.
We also can obtain that $K_B$ never be replaced by $B$.
The first phase never choose the full component $K$ again.

If no edge in $T^{t'}$ is corresponding to $\mathcal{C}[K]$, we keep a basic component in $K$. 
Then, we can find a full component that superior to $K$. The chosen full components never be chosen again.
\end{proof}

\begin{lem}\label{lem01}
$cost(T_{base}^{0} ) \geq \frac{1}{2}\cdot cost(S_{1})$. 
\end{lem}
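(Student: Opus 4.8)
The plan is to prove the equivalent inequality $cost(S_{1})\le 2\,cost(T_{base})$. Write $T_{base}=T_{base}^{0}$ for the terminal‑spanning tree produced at the end of the first phase, and let $K_{1},\ldots,K_{t}$ be the chosen full components in their final form (after all component replacements and basic‑component retentions), so that $T_{base}=MST(T^{0}\cup\mathcal{C}[K_{1}]\cup\cdots\cup\mathcal{C}[K_{t}])$ and $S_{1}=MST(T^{0}\cup K_{1}\cup\cdots\cup K_{t})$. I would derive an upper bound on $cost(S_{1})$ and a lower bound on $cost(T_{base})$ and glue them with the classical fact that $loss(K)\le\tfrac12 cost(K)$ for every full component $K$; equivalently $loss(K)\le cost(\mathcal{C}[K])$, since $cost(\mathcal{C}[K])=cost(K)-loss(K)$ (see \cite{Karpinski97,Robins05}).

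\emph{Upper bound.} By Lemma~\ref{lem:lem-2} the edge set of each $K_{i}$ is the disjoint union of the edges of the contracted component $\mathcal{C}[K_{i}]$ and the forest $Loss(K_{i})$. Hence, starting from $T_{base}=MST(T^{0}\cup\mathcal{C}[K_{1}]\cup\cdots\cup\mathcal{C}[K_{t}])$ and adding back the forests $Loss(K_{1}),\ldots,Loss(K_{t})$ re‑introduces exactly the Steiner points that had been collapsed; each such forest component attaches to the single node into which it was contracted, so no cycle is created, and the result is a spanning tree of $T^{0}\cup K_{1}\cup\cdots\cup K_{t}$ of cost $cost(T_{base})+\sum_{i=1}^{t}loss(K_{i})$. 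Therefore
\begin{eqnarray*}
cost(S_{1})=mst(T^{0}\cup K_{1}\cup\cdots\cup K_{t})\le cost(T_{base})+\sum_{i=1}^{t}loss(K_{i}).
\end{eqnarray*}

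\emph{Lower bound.} The replacement step and the basic‑component step of the first phase are engineered precisely so that at the end every chosen component $K_{i}$ still has all of $\mathcal{C}[K_{i}]$ present in $T_{base}$ (this is exactly what drives the non‑re‑selection in Lemma~\ref{lem09}: a component whose contracted part would leave the current tree is split and re‑fitted, and loss and cost add over the split). These copies of the $\mathcal{C}[K_{i}]$ are pairwise edge‑disjoint, because distinct full components own distinct Steiner points and every positive‑cost edge of $\mathcal{C}[K_{i}]$ is incident to a Steiner point of $K_{i}$. Consequently $cost(T_{base})\ge\sum_{i}cost(\mathcal{C}[K_{i}])=\sum_{i}\big(cost(K_{i})-loss(K_{i})\big)\ge\sum_{i}loss(K_{i})$, the last inequality by the loss bound applied termwise. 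Combining the two displays gives $cost(S_{1})\le cost(T_{base})+cost(T_{base})=2\,cost(T_{base})$, i.e.\ $cost(T_{base}^{0})\ge\tfrac12 cost(S_{1})$.

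I expect the real work to be in the lower‑bound step: one must make fully precise, using the replacement rule and the retention of basic components together with Lemma~\ref{lem09}, that after all iterations $T_{base}$ genuinely contains a Steiner‑point‑disjoint (hence edge‑disjoint) copy of each current $\mathcal{C}[K_{i}]$, carefully tracking how the pieces produced when a component is split inherit both its cost and its loss. A minor but necessary preliminary is to fix the exact meaning of taking an $MST$ over loss‑contracted components, so that the expansion in the upper bound is literally a spanning tree of $T^{0}\cup K_{1}\cup\cdots\cup K_{t}$ of the claimed cost; once the contraction model is pinned down this part is routine.
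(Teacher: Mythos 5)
Your proposal is correct in outline and its first half coincides with the paper's: both start from the decomposition $cost(S_{1})\le cost(T_{base}^{0})+\sum_{j}loss(K_{j})$ and both ultimately rest on the fact $loss(K)\le\frac{1}{2}cost(K)$. Where you diverge is in how the total loss is charged. The paper bounds $\sum_{j}loss(K_{j})\le\frac{1}{2}\sum_{j}cost(K_{j})\le\frac{1}{2}cost(S_{1})$, using only that the chosen components appearing in $S_{1}$ are edge-disjoint subgraphs of $S_{1}$, and then rearranges $cost(S_{1})\le cost(T_{base}^{0})+\frac{1}{2}cost(S_{1})$. You instead charge the loss against the base tree, via $\sum_{j}loss(K_{j})\le\sum_{j}cost(\mathcal{C}[K_{j}])\le cost(T_{base}^{0})$, which avoids the rearrangement but requires the stronger structural claim that $T_{base}^{0}$ contains a full, edge-disjoint copy of every $\mathcal{C}[K_{j}]$. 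That claim is exactly what the replacement rule is advertised to maintain, and you rightly flag it as the real work; note, however, that it is delicate for the basic-component case, since a basic component is retained precisely when \emph{no} edge of $\mathcal{C}[K_{i}]$ survives in the current tree, so its contracted edge need not sit in $T_{base}^{0}$ at all. The paper's weaker hypothesis (edge-disjointness inside $S_{1}$, with the sum taken only over components actually present in $S_{1}$) sidesteps this, at the cost of being silent about how the replacement mechanism justifies the opening decomposition. In short: your route is a legitimate variant that makes the role of the contracted components explicit, but it leans harder on the under-specified replacement/basic-component bookkeeping than the paper's own argument does.
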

\begin{proof}
The cost of the Steiner tree in the first phase is
\begin{eqnarray*}
cost(S_{1}) & = & cost(T_{base}^{0})+ \sum_{K_{j}\in S_{1}} loss(K_{j}).
\end{eqnarray*}
Since $loss(K) \leq \frac{1}{2}\cdot cost(K)$~\cite{Robins05} for any full component $K$,
\begin{eqnarray*}
cost(S_{1}) & \leq  & cost(T_{base}^{0})+ \sum_{K_{j}\in S_{1}} \frac{1}{2}\cdot cost(K_{j}) \\
            & \leq & cost(T_{base}^{0})+ \frac{1}{2}\cdot cost(S_{1})
\end{eqnarray*}
which yields $cost(T_{base}^{0}) \geq \frac{1}{2}\cdot cost(S_{1})$.
\end{proof}

\begin{lem}\label{lem06}
If no full component can improve the terminal-spanning tree $T$,
	\[
	load_{T}\left(\bigcup_{i=1}^{n} K_{i}\right) \geq \sum_{i=1}^{n} load_{T}(K_{i})
\]
for full components $K_1,K_2,\ldots,K_n$.
\end{lem}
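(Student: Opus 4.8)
The plan is to reduce the $n$-component statement to the two-component case by induction on $n$, so the real content is the claim
\[
load_{T}(K_{1}\cup K_{2}) \geq load_{T}(K_{1}) + load_{T}(K_{2})
\]
whenever no full component can improve $T$. First I would unfold the definition of $load$. Writing $load_{T}(K) = cost(K) + mst(T\cup E_{0}(K)) - cost(T)$, the desired inequality is equivalent, after cancelling the $cost(K_i)$ terms, to
\[
mst\bigl(T\cup E_{0}(K_{1}\cup K_{2})\bigr) + cost(T) \geq mst\bigl(T\cup E_{0}(K_{1})\bigr) + mst\bigl(T\cup E_{0}(K_{2})\bigr),
\]
i.e. a supermodularity-type statement about the function $R' \mapsto mst(T\cup E_{0}(R'))$ on terminal subsets. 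Note $E_{0}(K_{1}\cup K_{2}) \supseteq E_{0}(K_{1})\cup E_{0}(K_{2})$, so contracting the terminals of $K_1\cup K_2$ is at least as strong as contracting each pair separately.

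The key step is to build an explicit spanning structure realizing the left-hand side. Let $T_{1}$ be an optimal spanning tree of $T\cup E_{0}(K_{1})$, so $cost(T_{1}) = mst(T\cup E_{0}(K_{1}))$; I want to show $mst(T_{1}\cup E_{0}(K_{2})) \le mst(T\cup E_{0}(K_{2})) + \bigl(cost(T_{1}) - cost(T)\bigr)$, which combined with $mst(T\cup E_{0}(K_{1}\cup K_{2})) \le mst(T_{1}\cup E_{0}(K_{2}))$ (since $T_1$ already encodes the zero-cost edges of $K_1$) would finish the proof. Equivalently, in the notation of the paper this says $\Psi_{T_{1},T}(K_{2}) = cost(T_{1}) - cost(T) - mst(T_{1}\cup E_{0}(K_{2})) + mst(T\cup E_{0}(K_{2})) \ge 0$. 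I would prove this by a matroid-exchange / swap argument on minimum spanning trees: starting from an optimal $MST$ of $T\cup E_{0}(K_{2})$ and the tree $T_{1}$, repeatedly swap edges to transform one into a spanning tree of $T_{1}\cup E_{0}(K_{2})$ while controlling the cost increase by $cost(T_{1}) - cost(T)$. The hypothesis that no full component improves $T$ is what prevents $T$ from being ``wasteful'' relative to the contractions, which is exactly what makes the inequality go the right way; I expect this is where that hypothesis is genuinely used.

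The main obstacle is making the edge-swapping argument rigorous while handling the zero-cost cliques $E_{0}(\cdot)$ correctly — in particular, ensuring that when we pass from $T$ to $T_{1}$ the ``savings'' attributable to $E_{0}(K_{1})$ do not interfere with the ``savings'' attributable to $E_{0}(K_{2})$, which is precisely the supermodularity we are claiming. A clean way around the combinatorial bookkeeping is to argue via Lemma~\ref{lem08}: since no full component improves $T$, we have control $cost(T) \le opt_{k}$-type bounds, and more usefully $gain_{T}(K) \le 0$ for all relevant $K$, so $load_{T}(K) = cost(K) - gain_{T}(K) \ge cost(K) \ge 0$ for each component while the contractions for the union can only be done ``in sequence,'' each subsequent $load$ measured against an already-contracted tree being no larger. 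Formalizing this ``sequential contraction'' picture — adding $E_{0}(K_{1}), E_{0}(K_{2}), \ldots$ one at a time and telescoping — and checking each step is monotone is the heart of the argument; the induction on $n$ then follows immediately by applying the two-component case with $K_{1}$ replaced by $\bigcup_{i=1}^{n-1} K_{i}$.
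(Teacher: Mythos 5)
Your overall architecture is the same as the paper's: the paper telescopes $mst\bigl(T\cup \bigcup_{j\le i} E_{0}(K_{j})\bigr)-mst\bigl(T\cup \bigcup_{j\le i-1} E_{0}(K_{j})\bigr)$ over $i$ and bounds each increment below by $mst(T\cup E_{0}(K_{i}))-cost(T)$, which is exactly your induction reduced to the two-component target
\[
mst\bigl(T\cup E_{0}(K_{1})\cup E_{0}(K_{2})\bigr)+cost(T)\;\ge\; mst\bigl(T\cup E_{0}(K_{1})\bigr)+mst\bigl(T\cup E_{0}(K_{2})\bigr).
\]
The problem is that your key step points in the wrong direction. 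You need a \emph{lower} bound on $mst(T\cup E_{0}(K_{1})\cup E_{0}(K_{2}))$, but your chain produces an \emph{upper} bound: combining your two inequalities $mst(T\cup E_{0}(K_{1}\cup K_{2}))\le mst(T_{1}\cup E_{0}(K_{2}))$ and $mst(T_{1}\cup E_{0}(K_{2}))\le mst(T\cup E_{0}(K_{2}))+cost(T_{1})-cost(T)$ gives $mst(T\cup E_{0}(K_{1})\cup E_{0}(K_{2}))+cost(T)\le mst(T\cup E_{0}(K_{1}))+mst(T\cup E_{0}(K_{2}))$, i.e.\ the \emph{reverse} of the target, so it cannot ``finish the proof.'' Moreover the inequality $\Psi_{T_{1},T}(K_{2})\ge 0$ that you propose to establish by edge swaps asserts that the savings from contracting $E_{0}(K_{2})$ \emph{increase} after $E_{0}(K_{1})$ has been contracted; the true statement (the Karpinski--Zelikovsky contraction lemma, which is what both directions of this argument hinge on) is the opposite: marginal savings diminish. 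Your claim is false in general: take $T$ the path $a$--$b$--$c$ with $cost(ab)=1$, $cost(bc)=2$, $\Gamma(K_{1})=\{a,c\}$, $\Gamma(K_{2})=\{b,c\}$; the savings of $E_{0}(K_{2})$ drop from $2$ against $T$ to $1$ against $T_{1}=MST(T\cup E_{0}(K_{1}))$, so $\Psi_{T_{1},T}(K_{2})=-1<0$, while the lemma's conclusion still holds ($3+0\ge 1+1$).

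What the proof actually needs is the diminishing-returns inequality $mst(T\cup E_{0}(K_{1}))-mst(T\cup E_{0}(K_{1})\cup E_{0}(K_{2}))\le cost(T)-mst(T\cup E_{0}(K_{2}))$, which is provable unconditionally by the matroid-exchange argument you sketch (and which the paper simply asserts as the one inequality in its chain); notably, the hypothesis that no full component can improve $T$ is not used in the paper's proof either, so your search for where it ``is genuinely used'' was chasing something that is not there. Your fallback via Lemma~\ref{lem08} does not repair the gap: knowing $gain_{T}(K)\le 0$, hence $load_{T}(K)=-gain_{T}(K)\ge 0$ for each $K$ (note $load_{T}(K)=-gain_{T}(K)$, not $cost(K)-gain_{T}(K)$ as you wrote), gives nonnegativity of each summand but says nothing about superadditivity of $load_{T}$ over a union. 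Finally, be careful with $E_{0}(K_{1}\cup K_{2})$ versus $E_{0}(K_{1})\cup E_{0}(K_{2})$: the extra cross edges in the clique on $\Gamma(K_{1})\cup\Gamma(K_{2})$ make the left-hand side \emph{smaller} and the inequality \emph{harder}, not ``stronger'' in your favor; the paper's first displayed equality silently adopts the union-of-cliques reading, which is the one under which the telescoping argument goes through.
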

\begin{proof}
The proof can be obtained by the following chain of inequalities:
\begin{eqnarray*}
load_{T}\left(\bigcup_{i=1}^{n} K_{i}\right) 
   & =    & cost\left(\bigcup_{i=1}^{n} K_{i}\right)+mst\left(T\cup \bigcup_{i=1}^{n} E_{0}\left( K_{i}\right)\right)-cost(T)\\
   & =    & \sum_{i=1}^{n} cost(K_{i})+mst\left(T\cup \bigcup_{j=1}^{i} E_{0}\left( K_{j}\right)\right)-cost\left(T/\bigcup_{j=1}^{i-1} E_{0}\left( K_{j}\right)\right)\\
   & \geq & \sum_{i=1}^{n} cost(K_{i})+mst\left(T\cup E_{0}\left(K_{i}\right)\right)-cost(T)  \\
   & =    & \sum_{i=1}^{n} load_{T}(K_{i}).
\end{eqnarray*}
\end{proof}

The following lemma guarantees that the solution of $k$-TPH at the second phase is a well solution. 
\begin{lem}\label{lem03}
For any chosen full components $K_{i}$ and $K_{j}$, $\left|\Gamma(K_{i})\cap \Gamma(K_{j})\right|\leq 1$.
\end{lem}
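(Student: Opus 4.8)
### Proof Proposal for Lemma~\ref{lem03}

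\textbf{The plan.} The goal is to show that the full components selected in the second phase (the $k$-ERGH) always intersect in at most one terminal, i.e., the running solution is a \emph{well solution}. I would argue by contradiction on the first iteration $t$ at which the newly chosen component $K_t$ shares two or more terminals with some previously chosen component $K_i$, and derive a contradiction with the greedy (minimum-$f$) choice made by the algorithm. The intuition is that if $K_t$ and $K_i$ overlap in two terminals, then after the edges $E_0(K_i)$ were already contracted into $T_{base}^{t-1}$ (and into $T_{origin}^{t-1}$), those two terminals are already joined by a zero-cost path in both trees; this makes part of $K_t$ ``redundant'' in the sense that a strictly smaller full component $K_t'$ obtained from $K_t$ by splitting at a shared terminal has load no larger and $\Psi$ no smaller, hence a value $f(K_t')\le f(K_t)$ that should have been chosen instead — contradicting either minimality of $t$ or minimality of $f(K_t)$.

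\textbf{Key steps, in order.}
\begin{enumerate}
\item[(1)] Fix the first iteration $t$ where $|\Gamma(K_t)\cap\Gamma(K_j)|\ge 2$ for some $j<t$; pick two terminals $u,v\in\Gamma(K_t)\cap\Gamma(K_j)$.
\item[(2)] Since $K_j$ was chosen at iteration $j$, the edges $E_0(K_j)$ were added at step~5 of Algorithm~\ref{Alg2}, so in both $T_{origin}^{t-1}$ and $T_{base}^{t-1}$ the terminals $u$ and $v$ lie in a common subtree connected entirely by zero-cost edges coming from earlier contractions (this is where I would use that contractions are cumulative, $T_{base}^{t}=MST(T_{base}^{t-1}\cup E_0(K_t))$, and similarly for $T_{origin}$).
\item[(3)] Split $K_t$ at terminal $v$ (which is a leaf of $K_t$ since $v\in\Gamma(K_t)$; more care is needed if $v$ is internal, in which case one first splits non-leaf terminals as in~\cite{Robins05}). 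This yields full components $A$ and $B$ with $\cost(A)+\cost(B)=\cost(K_t)$ and, using that $u,v$ are already connected by zero-cost edges, $mst(T\cup E_0(K_t))=mst(T\cup E_0(A)\cup E_0(B))=mst(T\cup E_0(A\cup B))$ for $T\in\{T_{origin}^{t-1},T_{base}^{t-1}\}$.
\item[(4)] Conclude $load_{T_{base}^{t-1}}(K_t)\ge load_{T_{base}^{t-1}}(A)+load_{T_{base}^{t-1}}(B)$ (this inequality, or its converse direction, follows from the same telescoping argument as in Lemma~\ref{lem06}, applied to $T_{base}^{t-1}$) and, dually, $\Psi_{T_{origin}^{t-1},T_{base}^{t-1}}(K_t)\le \Psi_{T_{origin}^{t-1},T_{base}^{t-1}}(A)+\Psi_{T_{origin}^{t-1},T_{base}^{t-1}}(B)$, since $\Psi$ unrolls into a difference of two $mst$-telescopes.
\item[(5)] From a mediant inequality, $\min\{f(A),f(B)\}\le f(K_t)$, so one of $A,B$ is an at-least-as-good choice; since both $A$ and $B$ share strictly fewer terminals with every earlier $K_j$ than $K_t$ did (we removed the overlap at $v$), iterating this splitting eventually produces a component with $f$-value $\le f(K_t)$ and overlap $\le 1$ with all earlier components, contradicting either the minimality of $f(K_t)$ in step~3 of Algorithm~\ref{Alg2} or the minimality of $t$.
\end{enumerate}

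\textbf{Main obstacle.} The delicate part is step~(4): carefully checking that contracting $E_0(K_j)$ at an earlier iteration genuinely makes the two shared terminals ``equivalent'' in all the later trees, so that the $mst$ quantities split additively, \emph{and} that the needed inequalities for $load$ and for $\Psi$ point in the directions that make the mediant argument work. In particular $\Psi$ is a difference of two $mst$-functions, so superadditivity of one and subadditivity of the other must be combined correctly; I expect to need the hypothesis (used implicitly throughout the second phase) that no $k$-restricted full component improves $T_{base}$ after the first phase, analogous to the role it plays in Lemma~\ref{lem06}. A secondary technical point is handling the case where a shared terminal is an internal vertex of $K_t$ rather than a leaf, which is dealt with by the standard full-component splitting of~\cite{Robins05} before applying the argument above.
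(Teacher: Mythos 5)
Your overall strategy coincides with the paper's: decompose the offending component into two pieces $A$ and $B$, show that $load$ is superadditive and $\Psi$ is (sub)additive over the decomposition, and conclude via the mediant inequality that $\min\left\{f(A),f(B)\right\}\leq f(K_i)$, contradicting the greedy choice in the second phase. Your steps (4)--(5) are exactly the paper's argument, except that the paper arranges for $\Psi_{T_{origin}^{i-1},T_{base}^{i-1}}(K_i)=\Psi_{T_{origin}^{i-1},T_{base}^{i-1}}(A)+\Psi_{T_{origin}^{i-1},T_{base}^{i-1}}(B)$ to hold with equality, and gets the load inequality from Lemma~\ref{lem06}.

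There is, however, a concrete gap in step (3): in a full component every terminal is a leaf by definition, so ``splitting $K_t$ at the shared terminal $v$'' is vacuous and cannot produce two nontrivial full components $A$ and $B$ with $cost(A)+cost(B)=cost(K_t)$. (Your parenthetical worries about the case where $v$ is internal, but the problematic case is the generic one, namely $v$ a leaf.) The decomposition the paper actually uses is to delete an \emph{edge} $e$ of $K_i$, chosen in $K_i-Loss(K_i)$, and to take $A$ and $B$ to be the two connected components of $K_i-\left\{e\right\}$. Moreover, the part you defer as the ``main obstacle'' is precisely the substance of the paper's proof: since no full component improves $T_{base}^{0}$, one has $MST(T_{base}^{0}\cup K)=T_{base}^{0}\cup Loss(K)$ for every full component $K$, and since $E_{0}(K_j)$ was contracted at the earlier iteration $j<i$, both $T_{origin}^{i-1}-MST(T_{origin}^{i-1}\cup E_{0}(K_i))$ and $T_{base}^{i-1}-MST(T_{base}^{i-1}\cup E_{0}(K_i))$ contain a zero-cost edge coming from $E_{0}(K_j)$; these two facts are what allow one to exhibit an edge $e\in K_{i}-Loss(K_{i})$ for which $\Psi$ splits exactly while $load_{T_{base}^{i-1}}(K_i)\geq load_{T_{base}^{i-1}}(A)+load_{T_{base}^{i-1}}(B)$. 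Until that edge is exhibited and the two inequalities are verified, your proposal is a correct outline of the paper's route rather than a complete proof.
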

\begin{proof}
Without loss of generality, assume that $\left|\Gamma(K_{i})\cap \Gamma(K_{j})\right|= 2$ and $j< i$. Both $T_{origin}^{i-1}-MST(T_{origin}^{i-1}\cup E_{0}(K_i))$ and $T_{base}^{i-1}-MST(T_{base}^{i-1}\cup E_{0}(K_i))$ contain a zero-cost edge that is from $E_{0}(K_{j})$. Since any full component cannot improve $T_{base}^{0}$, $MST(T_{base}^{0}\cup K) = T_{base}^{0}\cup Loss(K)$ for any full component $K$.
We can find a edge $e\in K_{i}-Loss(K_{i})$ such that $\Psi_{T_{origin}^{i-1},T_{base}^{i-1}}(K_i)=\Psi_{T_{origin}^{i-1},T_{base}^{i-1}}(A)+\Psi_{T_{origin}^{i-1},T_{base}^{i-1}}(B)$ and $load_{T_{base}^{i-1}}(K_i)\geq load_{T_{base}^{i-1}}(A \cup B) \geq load_{T_{base}^{i-1}}(A)+load_{T_{base}^{i-1}}(B)$ (from Lemma~\ref{lem06}) where $A$ and $B$ are two connected components of $K_{i}-\left\{e\right\}$. Finally,
\begin{eqnarray*}
\frac{load_{T_{base}^{i-1}}(K_i)}{\Psi_{T_{origin}^{i-1},T_{base}^{i-1}}(K_i)}\geq \frac{load_{T_{base}^{i-1}}(A)+load_{T_{base}^{i-1}}(B)}{\Psi_{T_{origin}^{i-1},T_{base}^{i-1}}(A)+\Psi_{T_{origin}^{i-1},T_{base}^{i-1}}(B)} \geq \min\left\{\frac{load_{T_{base}^{i-1}}(A)}{\Psi_{T_{origin}^{i-1},T_{base}^{i-1}}(A)},\frac{load_{T_{base}^{i-1}}(B)}{\Psi_{T_{origin}^{i-1},T_{base}^{i-1}}(B)}\right\}
\end{eqnarray*}
which contradicts the choice of $K_{i}$.
\end{proof}

\begin{lem}\label{lem04}
For any Steiner tree $S$, $load_{T_{base}^{0}}(S) \geq load_{T_{base}^{i-1}}\left(S/\bigcup_{j=1}^{i-1}E_{0}\left(K_{i}\right)\right)$.
\end{lem}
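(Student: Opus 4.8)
The plan is to telescope the contraction one full component at a time. Put $S^{(0)}=S$ and, for $j=1,\dots,i-1$, $S^{(j)}=MST\!\left(S^{(j-1)}\cup E_{0}(K_{j})\right)$; then $S^{(i-1)}=S/\bigcup_{j=1}^{i-1}E_{0}(K_{j})$, while $T_{base}^{j}=MST\!\left(T_{base}^{j-1}\cup E_{0}(K_{j})\right)$ as in Algorithm~\ref{Alg2}. It therefore suffices to establish the one-step inequality
\[
load_{T_{base}^{j-1}}\!\left(S^{(j-1)}\right)\ \geq\ load_{T_{base}^{j}}\!\left(S^{(j)}\right)\qquad\text{for every }j\in\{1,\dots,i-1\}
\]
and to chain these $i-1$ inequalities.

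For the one-step inequality, note first that every tree that appears spans the full terminal set $R$, and that each $T_{base}^{j}$ is a terminal-spanning tree (no Steiner points); hence $E_{0}(S^{(j)})=E_{0}(R)$ and $mst\!\left(T_{base}^{j}\cup E_{0}(R)\right)=0$, so that $load_{T_{base}^{j}}(S^{(j)})=cost(S^{(j)})-cost(T_{base}^{j})$ and likewise for $j-1$. Substituting $cost(S^{(j)})=mst(S^{(j-1)}\cup E_{0}(K_{j}))$ and $cost(T_{base}^{j})=mst(T_{base}^{j-1}\cup E_{0}(K_{j}))$ and collecting terms, the one-step difference is exactly
\[
load_{T_{base}^{j-1}}\!\left(S^{(j-1)}\right)-load_{T_{base}^{j}}\!\left(S^{(j)}\right)=\Psi_{S^{(j-1)},\,T_{base}^{j-1}}(K_{j})=gain_{S^{(j-1)}}(K_{j})-gain_{T_{base}^{j-1}}(K_{j}),
\]
the second equality being immediate from the definitions of $gain$ and $\Psi$. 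So the whole lemma reduces to the claim $gain_{S^{(j-1)}}(K_{j})\geq gain_{T_{base}^{j-1}}(K_{j})$, that is,
\[
cost\!\left(S^{(j-1)}\right)-mst\!\left(S^{(j-1)}\cup E_{0}(K_{j})\right)\ \geq\ cost\!\left(T_{base}^{j-1}\right)-mst\!\left(T_{base}^{j-1}\cup E_{0}(K_{j})\right),
\]
i.e.\ contracting the terminals of $K_{j}$ frees up at least as much cost inside $S^{(j-1)}$ as inside $T_{base}^{j-1}$.

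This per-step comparison is where the real work lies, and I expect it to be the main obstacle. The naive reading --- ``the minimum spanning tree has the smallest contraction saving, so $S^{(j-1)}$ can only save more'' --- is false for an arbitrary Steiner tree: such an $S$ can be considerably cheaper than $T_{base}^{0}$ by using many high-degree Steiner points, and can concentrate its weight on edges that the contraction never touches, so in isolation the inequality can fail. The proof therefore has to exploit that Phase~1 terminated only when no $k$-restricted full component could improve $T_{base}^{0}$. Concretely I would argue that this un-improvability is inherited by $T_{base}^{j-1}$, since $gain$ is monotone non-increasing under contraction of terminals, so that $MST(T_{base}^{j-1}\cup K)=T_{base}^{j-1}\cup Loss(K)$ for every full component $K$ (the identity already used in the proof of Lemma~\ref{lem03}); I would then use this to evaluate $mst(T_{base}^{j-1}\cup E_{0}(K_{j}))$ and combine it with the super-additivity of $load$ from Lemma~\ref{lem06} to bound $mst(S^{(j-1)}\cup E_{0}(K_{j}))$ from above. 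Once $gain_{S^{(j-1)}}(K_{j})-gain_{T_{base}^{j-1}}(K_{j})\geq 0$ is in hand, summing it over $j=1,\dots,i-1$ telescopes precisely to $load_{T_{base}^{0}}(S)\geq load_{T_{base}^{i-1}}\!\left(S/\bigcup_{j=1}^{i-1}E_{0}(K_{j})\right)$.
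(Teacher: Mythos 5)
Your algebraic setup is correct and the reduction is clean: the telescoping $S^{(0)},\dots,S^{(i-1)}$ is legitimate, the identity
\[
load_{T_{base}^{j-1}}\!\left(S^{(j-1)}\right)-load_{T_{base}^{j}}\!\left(S^{(j)}\right)=\Psi_{S^{(j-1)},\,T_{base}^{j-1}}(K_{j})=gain_{S^{(j-1)}}(K_{j})-gain_{T_{base}^{j-1}}(K_{j})
\]
checks out, and summing the per-step inequalities would indeed yield the lemma. But the proposal stops exactly where the proof has to happen. The claim $gain_{S^{(j-1)}}(K_{j})\geq gain_{T_{base}^{j-1}}(K_{j})$ is labelled ``the main obstacle'' and then only a plan is offered (``I would argue\ldots'', ``I would then use\ldots''); no step of that plan is carried out, and as sketched it does not close: Lemma~\ref{lem06} is stated for a terminal-spanning tree that no full component can improve, so it cannot be invoked to bound $mst\!\left(S^{(j-1)}\cup E_{0}(K_{j})\right)$ for an arbitrary Steiner tree $S^{(j-1)}$, and the identity $MST(T_{base}^{j-1}\cup K)=T_{base}^{j-1}\cup Loss(K)$ concerns adjoining $K$ itself rather than contracting $E_{0}(K_{j})$, so it does not obviously produce the needed comparison. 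What you have is an unfinished reduction, not a proof.

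For comparison, the paper does not telescope at all: it writes $load_{T_{base}^{0}}(S)=cost(S)-cost(T_{base}^{0})$, inserts and removes the two $mst$ terms in a single step, and rests the entire argument on the one aggregate inequality $cost(S)-cost(T_{base}^{0})-mst\!\left(S\cup\bigcup_{j=1}^{i-1}E_{0}(K_{j})\right)+mst\!\left(T_{base}^{0}\cup\bigcup_{j=1}^{i-1}E_{0}(K_{j})\right)\geq 0$, attributed to the fact that no full component can improve $T_{base}^{0}$. That aggregate inequality is precisely the sum over $j$ of your per-step claims, so you have located the crux correctly --- and you are right that the naive ``MST saves least'' argument is false and that the non-improvability of $T_{base}^{0}$ must enter. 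The difference is that the paper commits to the exact inequality it needs and uses it, whereas your write-up defers it; to complete your version you would still have to supply the argument linking the non-improvability hypothesis (inherited by $T_{base}^{j-1}$ via the contraction-monotonicity used in Lemma~\ref{lem11}) to the comparison of contraction savings in $S^{(j-1)}$ versus $T_{base}^{j-1}$, and that argument is absent.
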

\begin{proof}
Since no full component can improve the terminal-spanning tree $T_{base}^{0}$, $cost(S)-cost(T_{base}^{0})-mst\left(S\cup \bigcup_{j=1}^{i-1} E_{0}\left(K_{i}\right)\right)+mst\left(T_{base}^{0}\cup \bigcup_{j=1}^{i-1} E_{0}\left(K_{i}\right)\right)\geq 0$. The proof can be obtained by the following chain of inequalities:
\begin{eqnarray*}
load_{T_{base}^{0}}(S)
   & =    & cost(S)-cost(T_{base}^{0}) \\
   & =    & mst\left(S\cup \bigcup_{j=1}^{i-1} E_{0}\left(K_{i}\right)\right)-mst\left(T_{base}^{0}\cup \bigcup_{j=1}^{i-1} E_{0}\left(K_{i}\right)\right) \\
   &      & + cost(S)-cost(T_{base}^{0})-mst\left(S\cup \bigcup_{j=1}^{i-1} E_{0}\left(K_{i}\right)\right)+mst\left(T_{base}^{0}\cup \bigcup_{j=1}^{i-1} E_{0}\left(K_{i}\right)\right)\\
   & \geq & mst\left(S\cup \bigcup_{j=1}^{i-1} E_{0}\left(K_{i}\right)\right)-mst\left(T_{base}^{0}\cup \bigcup_{j=1}^{i-1} E_{0}\left(K_{i}\right)\right) \\
   & =    & load_{T_{base}^{i-1}}\left(S/\bigcup_{j=1}^{i-1}E_{0}\left(K_{i}\right)\right).
\end{eqnarray*}
\end{proof}

\begin{lem}\label{lem11}
If $load_{T_{base}^{i-1}/E_{0}\left(C\right)}(K)\leq \Psi_{T_{origin}^{i-1},T_{base}^{i-1}/ E_{0}\left(C\right)}(K)$ for any full components $C$ and $K$, 
	\[
	\frac{load_{T_{base}^{i-1}/ E_{0}\left(C\right)}(K)}{\Psi_{T_{origin}^{i-1},T_{base}^{i-1}/ E_{0}\left(C\right)}(K)} \geq \frac{load_{T_{base}^{i-1}}(K)}{\Psi_{T_{origin}^{i-1},T_{base}^{i-1}}(K)}.
\]
\end{lem}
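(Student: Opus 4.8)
The plan is to reduce the claim to a single inequality between minimum‑spanning‑tree costs: contracting $E_{0}(C)$ in the base tree turns out to shift the numerator $load$ and the denominator $\Psi$ by \emph{exactly the same amount}, so the inequality between the two ratios will follow from a mediant argument once that common amount is shown to be nonnegative. Concretely, write $T_{o}=T_{origin}^{i-1}$, $T_{b}=T_{base}^{i-1}$ and $T_{b}'=T_{b}/E_{0}(C)=MST(T_{b}\cup E_{0}(C))$. First I would use the standard fact that replacing a subgraph by one of its minimum spanning trees does not change the cost of the overall minimum spanning tree; here it gives $cost(T_{b}')=mst(T_{b}\cup E_{0}(C))$ and $mst(T_{b}'\cup E_{0}(K))=mst(T_{b}\cup E_{0}(C)\cup E_{0}(K))$ for every full component $K$. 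Plugging these into the definitions of $load$ and $\Psi$, the terms $cost(K)$, $cost(T_{o})$ and $mst(T_{o}\cup E_{0}(K))$ cancel, and a short computation yields
\[
load_{T_{b}'}(K)-load_{T_{b}}(K)\;=\;\Psi_{T_{o},T_{b}'}(K)-\Psi_{T_{o},T_{b}}(K)\;=\;\delta ,
\]
where, using $cost(T_{b})=mst(T_{b})$ since $T_{b}$ is a tree,
\[
\delta\;=\;mst(T_{b})+mst\bigl(T_{b}\cup E_{0}(C)\cup E_{0}(K)\bigr)-mst\bigl(T_{b}\cup E_{0}(C)\bigr)-mst\bigl(T_{b}\cup E_{0}(K)\bigr).
\]

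Next I would show $\delta\ge 0$. This is precisely the supermodularity of the map $X\mapsto mst(T_{b}\cup X)$ over subsets $X$ of the complete zero‑cost edge set $E_{0}(R)$: adding a zero‑cost edge $e$ to $MST(T_{b}\cup X)$ lowers its cost by the bottleneck distance between the endpoints of $e$ in $MST(T_{b}\cup X)$, and this bottleneck distance is non‑increasing as $X$ grows (enlarging a graph never increases bottleneck distances), so the marginal decrease is non‑increasing in $X$. Applying supermodularity with $X_{1}=E_{0}(C)$, $X_{2}=E_{0}(K)$ and noting $E_{0}(C)\cap E_{0}(K)=E_{0}(\Gamma(C)\cap\Gamma(K))$,
\[
mst(T_{b}\cup E_{0}(C))+mst(T_{b}\cup E_{0}(K))\le mst\bigl(T_{b}\cup E_{0}(C)\cup E_{0}(K)\bigr)+mst\bigl(T_{b}\cup E_{0}(\Gamma(C)\cap\Gamma(K))\bigr),
\]
whence $\delta\ge mst(T_{b})-mst\bigl(T_{b}\cup E_{0}(\Gamma(C)\cap\Gamma(K))\bigr)\ge 0$, the last step because adding edges can only decrease the minimum‑spanning‑tree cost.

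Finally I would conclude by a mediant inequality. Put $a=load_{T_{b}}(K)$ and $b=\Psi_{T_{o},T_{b}}(K)$, so that $load_{T_{b}'}(K)=a+\delta$ and $\Psi_{T_{o},T_{b}'}(K)=b+\delta$ with $\delta\ge 0$, the denominators $b$ and $b+\delta$ being the (positive) quantities that the greedy step of $k$-ERGH divides by. The hypothesis $load_{T_{b}'}(K)\le\Psi_{T_{o},T_{b}'}(K)$ reads $a+\delta\le b+\delta$, i.e.\ $a\le b$, and therefore
\[
\frac{load_{T_{b}'}(K)}{\Psi_{T_{o},T_{b}'}(K)}-\frac{load_{T_{b}}(K)}{\Psi_{T_{o},T_{b}}(K)}=\frac{a+\delta}{b+\delta}-\frac{a}{b}=\frac{(b-a)\,\delta}{b(b+\delta)}\ge 0,
\]
which is the assertion. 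The only step carrying genuine content is the supermodularity/bottleneck argument giving $\delta\ge 0$; an alternative there, in the spirit of Lemma~\ref{lem06} and of the edge‑splitting used in the proof of Lemma~\ref{lem03}, would be to split $K$ along an edge of $K-Loss(K)$ and induct, but the direct argument above seems cleanest.
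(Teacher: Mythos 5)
Your proof is correct and follows essentially the same route as the paper's: both rewrite the two ratios so that numerator and denominator differ by the common shift $\delta = mst(T_b) + mst(T_b\cup E_0(C)\cup E_0(K)) - mst(T_b\cup E_0(C)) - mst(T_b\cup E_0(K))$ and then compare the fractions using $\delta\ge 0$ together with the hypothesis that the shifted numerator is at most the shifted denominator. The only difference is that the paper simply asserts the inequality $cost(T_{base}^{i-1}/E_0(C)) - mst(T_{base}^{i-1}\cup E_0(C)\cup E_0(K)) \le cost(T_{base}^{i-1}) - mst(T_{base}^{i-1}\cup E_0(K))$ (equivalently $\delta\ge 0$) without justification, whereas you supply the standard bottleneck-distance/supermodularity argument for it and make the mediant step explicit, which strengthens rather than changes the argument.
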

\begin{proof}
Since $load_{T_{base}^{i-1}/E_{0}\left(C\right)}(K)\leq \Psi_{T_{origin}^{i-1},T_{base}^{i-1}/ E_{0}\left(C\right)}(K)$ and $cost(T_{base}^{i-1}/ E_{0}\left(C\right))-mst(T_{base}^{i-1}\cup E_{0}\left(C\right)\cup E_{0}\left(K\right))\leq cost(T_{base}^{i-1})-mst(T_{base}^{i-1}\cup E_{0}\left(K\right))$,
the proof can be obtained by the following chain of inequalities:
\begin{eqnarray*}
   &      & \frac{load_{T_{base}^{i-1}/ E_{0}\left(C\right)}(K)}{\Psi_{T_{origin}^{i-1},T_{base}^{i-1}/ E_{0}\left(C\right)}(K)} \\
   & =    & \frac{cost(K)+mst(T_{base}^{i-1}\cup E_{0}\left(C\right)\cup E_{0}\left(K\right))-cost(T_{base}^{i-1}/ E_{0}\left(C\right))}{cost(T_{origin}^{i-1})-cost(T_{base}^{i-1}/ E_{0}\left(C\right))-mst(T_{origin}^{i-1}\cup E_{0}(K))+mst(T_{base}^{i-1}\cup E_{0}\left(C\right)\cup E_{0}(K))} \\
   & \geq    & \frac{cost(K)+mst(T_{base}^{i-1}\cup E_{0}\left(K\right))-cost(T_{base}^{i-1})}{cost(T_{origin}^{i-1})-cost(T_{base}^{i-1})-mst(T_{origin}^{i-1}\cup E_{0}(K))+mst(T_{base}^{i-1}\cup E_{0}(K))} \\
   & =    & \frac{load_{T_{base}^{i-1}}(K)}{\Psi_{T_{origin}^{i-1},T_{base}^{i-1}}(K)}.
\end{eqnarray*}
\end{proof}

Based on the analysis in~\cite{Zel96}, the bound on the cost of our solution is as follows.
\begin{thm}\label{thm:0001}
The $k$-TPH finds a Steiner tree $S$ such that 
	\[
cost(S)  \leq  \left(\ln \frac{mst -cost(T_{base}^{0})}{opt_{k} -cost(T_{base}^{0})}+ 1\right)\cdot \left(opt_{k}-cost(T_{base}^{0})\right) + cost(T_{base}^{0}).
\]
\end{thm}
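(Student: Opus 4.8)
The plan is to follow the classical relative-greedy analysis of Zelikovsky~\cite{Zel96} (also used by Robins--Zelikovsky~\cite{Robins05}), but carried out for the second-phase quantity $cost(S_2)-cost(T_{base}^{0})$ instead of $cost(S)-mst$. The key idea is to track how the potential $\Phi^t := cost(T_{origin}^t)-cost(T_{base}^t)$ decreases across iterations of $k$-ERGH and to charge this decrease against the cost paid. First I would fix notation: write $\Phi^0 = mst - cost(T_{base}^0)$ (since $T_{origin}^0 = MST(G_R)$) and note that when the algorithm terminates at some iteration $t^\ast$ we have $\Phi^{t^\ast}=0$, and $cost(S_2) = cost(T_{origin}^0) + \sum_{t=1}^{t^\ast} loss(K_t)$ after the final contraction — more precisely I would express $cost(S_2)-cost(T_{base}^0)$ as a telescoping sum involving $load_{T_{base}^{t-1}}(K_t)$ and the changes in $\Phi$.

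The core inequality to establish is that, at each iteration $t$, the full component $K_t$ chosen to minimise $f(K)=load_{T_{base}^{t-1}}(K)/\Psi_{T_{origin}^{t-1},T_{base}^{t-1}}(K)$ satisfies
\begin{eqnarray*}
f(K_t) \;\leq\; \frac{opt_k - cost(T_{base}^0)}{\Phi^{t-1}}.
\end{eqnarray*}
To prove this I would apply $k$-ERGH's greedy choice to the full components $K_1^\ast,\dots,K_m^\ast$ of an optimal $k$-restricted Steiner tree $Opt_k$: the minimum ratio over these candidates is at most the ratio of sums, $\big(\sum_j load_{T_{base}^{t-1}}(K_j^\ast)\big)/\big(\sum_j \Psi_{T_{origin}^{t-1},T_{base}^{t-1}}(K_j^\ast)\big)$. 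For the numerator, Lemma~\ref{lem06} together with Lemma~\ref{lem04} (and Lemma~\ref{lem11} to handle the intermediate contractions, since the $K_j^\ast$ need not be disjoint a priori — but after restricting to a well solution via Lemma~\ref{lem03}-type reasoning they behave subadditively) gives $\sum_j load_{T_{base}^{t-1}}(K_j^\ast) \leq load_{T_{base}^0}(Opt_k) = opt_k - cost(T_{base}^0)$. For the denominator, the sum of $\Psi$ telescopes against the potential: $\sum_j \Psi_{T_{origin}^{t-1},T_{base}^{t-1}}(K_j^\ast)$ is at least $cost(T_{origin}^{t-1}) - cost(T_{base}^{t-1}) = \Phi^{t-1}$ because contracting all of $Opt_k$'s components drives $T_{origin}$ down to meet $T_{base}$ (using that no component improves $T_{base}^0$, so $mst(T_{base}^0 \cup E_0(Opt_k))$ equals the loss-level). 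Combining, $f(K_t)\le (opt_k - cost(T_{base}^0))/\Phi^{t-1}$.

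Given that per-iteration bound, the remainder is the standard integral estimate. Since $load_{T_{base}^{t-1}}(K_t) = f(K_t)\cdot \Psi^{t-1}$ and the drop in the potential at step $t$ is at least $\Psi_{T_{origin}^{t-1},T_{base}^{t-1}}(K_t)$ restricted appropriately (here I would be careful that contracting $E_0(K_t)$ in both trees reduces $\Phi$ by at least the ``useful'' part of $\Psi$), one gets a recursion of the shape $\Phi^{t} \le \Phi^{t-1} - \delta_t$ with the cost increment bounded by $\big((opt_k - cost(T_{base}^0))/\Phi^{t-1}\big)\,\delta_t$. Summing and comparing with $\int$ yields $cost(S_2)-cost(T_{base}^0) \le (opt_k - cost(T_{base}^0))\big(\ln(\Phi^0/(opt_k - cost(T_{base}^0))) + 1\big)$, i.e. exactly the claimed bound with $\Phi^0 = mst - cost(T_{base}^0)$; the final ``$+1\cdot(opt_k - cost(T_{base}^0))$'' term absorbs the last iteration where $\Phi^{t-1}$ has dropped below $opt_k - cost(T_{base}^0)$, handled by switching from the logarithmic charging to a direct bound (Lemma~\ref{lem08} guarantees $cost(T_{base}^0)\le opt_k$, so $opt_k - cost(T_{base}^0)\ge 0$ and the logarithm argument is well-defined and $\ge 1$). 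Since $S$ is the cheaper of $S_1$ and $S_2$, $cost(S)\le cost(S_2)$ gives the theorem.

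The main obstacle I anticipate is justifying the subadditivity and telescoping for the optimal components $K_j^\ast$ when they are \emph{not} disjoint: Lemmas~\ref{lem06}, \ref{lem04}, \ref{lem11} are stated for sequential contraction of a family of components against $T_{base}^0$ (and its contractions), but $Opt_k$ itself may share terminals among its full components. The clean way around this is to observe that any Steiner tree decomposes into full components sharing at most... in fact we must first split $Opt_k$ at non-leaf terminals so its full components meet in at most one terminal each, which does not change $opt_k$ or $load_{T_{base}^0}(Opt_k)$; then the ordered-contraction lemmas apply verbatim. Making that reduction precise — and checking that the $\Psi$-sum still lower-bounds $\Phi^{t-1}$ after the split — is the delicate part; everything else is bookkeeping and the textbook $\sum 1/i \approx \ln$ estimate.
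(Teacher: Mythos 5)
Your proposal follows essentially the same route as the paper's proof: the potential you call $\Phi^t$ is exactly the paper's $M_t=cost(T_{origin}^t)-cost(T_{base}^t)$, your per-iteration bound $f(K_t)\le (opt_k-cost(T_{base}^0))/\Phi^{t-1}$ is the paper's key chain of inequalities (obtained by comparing against the full components of $Opt_k$ via Lemmas~\ref{lem04}, \ref{lem06} and \ref{lem11} and the mediant inequality), and your handling of the final iteration by switching from logarithmic charging to a direct bound corresponds to the paper's splitting of $m_{r+1}$ into $m^*$ and $m'$ at the threshold $M_r> load_{T_{base}^0}(Opt_k)\ge M_{r+1}$. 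The disjointness issue you flag is resolved exactly as you suggest, since full components of $Opt_k$ by definition meet in at most one terminal; so the proposal is correct and matches the paper's argument.
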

\begin{proof}
Let $M_i=cost(T_{origin}^i)-cost(T_{base}^i)$ and $m_i=M_{i-1}-M_{i}$. Therefore, $f(K_i)=\frac{load_{T_{base}^{i-1}}(K_i)}{m_i}$. 
Let $Opt_{k}^{i-1}=\left(Opt_{k}/\bigcup_{l=1}^{i-1} E_{0}\left(k_{l}\right)\right)-\bigcup_{l=1}^{i-1} E_{0}\left(K_{l}\right)$. For $i=1,\ldots, r+1$ and $load_{T_{base}^{0}}(Opt_{k})\leq M_{i-1}$, we have
\begin{eqnarray*}
\frac{load_{T_{base}^{0}}(Opt_{k})}{M_{i-1}} 
   & = & \frac{load_{T_{base}^{0}}(Opt_{k})}{\Psi_{T_{origin}^{i-1},T_{base}^{i-1}}(Opt_{k})} \\
   & \overset{\mbox{Lem~\ref{lem04}}}{\geq} & \frac{load_{T_{base}^{i-1}}(Opt_{k}^{i-1})}{\Psi_{T_{origin}^{i-1},T_{base}^{i-1}}(Opt_{k}^{i-1})} \\
   & = & \frac{\sum_{X_{j}\in Opt_{k}^{i-1}} load_{T_{base}^{i-1}/ \bigcup_{l=1}^{j-1} E_{0}\left(X_{l}\right)}(X_{j})}{\sum_{X_{j}\in Opt_{k}^{i-1}} \Psi_{T_{origin}^{i-1}/ \bigcup_{l=1}^{j-1} E_{0}\left(X_{l}\right),T_{base}^{i-1}/ \bigcup_{l=1}^{j-1} E_{0}\left(X_{l}\right)}(X_{j})} \\
   & \geq & \frac{\sum_{X_j\in Opt_{k}^{i-1}} load_{T_{base}^{i-1}/ \bigcup_{l=1}^{j-1} E_{0}\left(X_{l}\right)}(X_j)}{\sum_{X_j\in Opt_{k}^{i-1}} \Psi_{T_{origin}^{i-1},T_{base}^{i-1}/ \bigcup_{l=1}^{j-1} E_{0}\left(X_{l}\right)}(X_j)} \\
   & \overset{\mbox{Lem~\ref{lem11}}}{\geq} & \frac{\sum_{X_j\in Opt_{k}^{i-1}} load_{T_{base}^{i-1}}(X_j)}{\sum_{X_j\in Opt_{k}^{i-1}} \Psi_{T_{origin}^{i-1},T_{base}^{i-1}}(X_j)} \\
   & \geq & \min_{X_j\in Opt_{k}^{i-1}} \left\{\frac{load_{T_{base}^{i-1}}(X_j)}{\Psi_{T_{origin}^{i-1},T_{base}^{i-1}}(X_j)}\right\} \\
   & \geq & \frac{load_{T_{base}^{i-1}}(K_{i})}{m_i}.
\end{eqnarray*}
Replacing $m_i=M_{i-1}-M_{i}$ into the above inequality yields
\begin{eqnarray}\label{eq:1}
M_{i} \leq M_{i-1}\left(1-\frac{load_{T_{base}^{i-1}}(K_{i})}{load_{T_{base}^{0}}(Opt_{k})}\right)
\end{eqnarray}
for $i = 1, 2, \ldots, t$. From the inequality~(\ref{eq:1}), 
\begin{eqnarray*}
M_{r} & \leq & M_{0}\prod_{i=1}^{t}\left(1-\frac{load_{T_{base}^{i-1}}(K_{i})}{load_{T_{base}^{0}}(Opt_{k})}\right).
\end{eqnarray*}
Taking the natural logarithms of both sides and using the inequality $\ln(1+x)\leq x$,
\begin{eqnarray}\label{eq:001}
\ln \frac{M_{0}}{M_{r}} & \geq & -\sum_{i=1}^{t} \ln \left(1-\frac{load_{T_{base}^{i-1}}(K_{i})}{load_{T_{base}^{0}}(Opt_{k})}\right) \notag \\
                        & \geq & \frac{\sum_{i=1}^{t} load_{T_{base}^{i-1}}(K_{i})}{load_{T_{base}^{0}}(Opt_{k})}.
\end{eqnarray}
Since $k$-TPA interrupts at $M_{t}=c(T_{origin}^{t})-c(T_{base}^{t})=0$, there exists $M_{r}>load_{T_{base}^{0}}(Opt_{k}) \geq M_{r+1}$ for some $r< t$.

The value $m_{r+1}$ can be split into two values $m^*$ and $m'$ such that 
\begin{eqnarray}\label{enq:0003}
m^*=M_{r}-load_{T_{base}^{0}}(Opt_{k}),
\end{eqnarray}
\begin{eqnarray}\label{enq:0004}
m'=load_{T_{base}^{0}}(Opt_{k})-M_{r+1},
\end{eqnarray}
According to inequality (\ref{enq:0003}), we have
\begin{eqnarray}\label{enq:0005}
M_{r+1}^*=M_{r}-m^*=M_{r}-M_{r}+load_{T_{base}^{0}}(Opt_{k})=load_{T_{base}^{0}}(Opt_{k}).
\end{eqnarray}

The value $load_{T_{base}^{r}}(K_{r+1})$ also can be split into $w^*$ and $w'$ such that $\frac{load_{T_{base}^{r}}(K_{r+1})}{m_{r+1}} = \frac{w^*}{m^*} = \frac{w'}{m'}$. Since $\frac{load_{T_{base}^{r}}(K_{r+1})}{m_{r+1}} = \frac{w^*}{m^*}$, inequality (\ref{eq:001}) implies that
\begin{eqnarray}\label{eq:002}
\ln \frac{M_{0}}{M_{r+1}^*} \geq \frac{\sum_{i=1}^{r} load_{T_{base}^{i-1}}(K_{i})+w^*}{load_{T_{base}^{0}}(Opt_{k})}.
\end{eqnarray}
Since $\frac{load_{T_{base}^{r}}(K_{r+1})}{m_{r+1}}\leq \frac{load_{T_{base}^{0}}(Opt_{k})}{M_{r}}\leq 1$, we have
\begin{eqnarray}\label{enq:0006}
w'\leq m'.
\end{eqnarray}

The ratio related to the cost of approximate Steiner tree after $r + 1$ iterations is at most
\begin{eqnarray*}
\frac{cost(S_{2})-cost(T_{base}^{0})}{opt_{k}-cost(T_{base}^{0})} 
   & =    & \frac{mst(T_{origin}^{0}\cup\bigcup_{i=1}^{t} K_i)-cost(T_{base}^{0})}{load_{T_{base}^{0}}(Opt_{k})} \\
   & \overset{\mbox{Lem~\ref{lem03}}}{\leq} & \frac{\sum_{i=1}^{r+1} load_{T_{base}^{i-1}}(K_{i}) +M_{r+1}}{load_{T_{base}^{0}}(Opt_{k})} \\
   & =    & \frac{\sum_{i=1}^{r} load_{T_{base}^{i-1}}(K_{i}) +w^*+w' +M_{r+1}}{load_{T_{base}^{0}}(Opt_{k})} \\
   & \overset{\mbox{(\ref{eq:002})}}{\leq} & \ln \frac{M_{0}}{M_{r+1}^*}+ \frac{w' +M_{r+1}}{load_{T_{base}^{0}}(Opt_{k})} \\
   & \overset{\mbox{(\ref{enq:0006})}}{\leq} & \ln \frac{M_{0}}{M_{r+1}^*}+ \frac{m' +M_{r+1}}{load_{T_{base}^{0}}(Opt_{k})} \\
   & \overset{\mbox{(\ref{enq:0004})}}{=}   & \ln \frac{M_{0}}{M_{r+1}^*}+ 1 \\
   & \overset{\mbox{(\ref{enq:0005})}}{=}   & \ln \frac{M_{0}}{load_{T_{base}^{0}}(Opt_{k})}+ 1 \\   
   & =    & \ln \frac{cost(T_{origin}^{0}) -cost(T_{base}^{0})}{opt_{k} -cost(T_{base}^{0})}+ 1 \\
   & =    & \ln \frac{mst -cost(T_{base}^{0})}{opt_{k} -cost(T_{base}^{0})}+ 1
\end{eqnarray*}
which yields
\begin{eqnarray}\label{enq:0007}
cost(S) \leq cost(S_{2})   \leq   \left(\ln \frac{mst -cost(T_{base}^{0})}{opt_{k} -cost(T_{base}^{0})}+ 1\right)\cdot \left(opt_{k}-cost(T_{base}^{0})\right) + cost(T_{base}^{0}).
\end{eqnarray}
\end{proof}

Since $cost(T_{base}^{0})\leq opt_k$ (from Lemma~\ref{lem08}) and $cost(T_{base}^{0})\geq \frac{1}{2}\cdot cost(S_1) \geq \frac{1}{2}\cdot opt_k$ (from Lemma~\ref{lem01}), we can assume that $cost(T_{base}^{0}) = \alpha \cdot opt_k$ for $\alpha \in  \left(\frac{1}{2},1\right)$.
The following result can be obtained.

\begin{thm}\label{thm:0003}
If $cost(T_{base}^{0}) = \alpha \cdot opt_k$ for $\alpha \in \left(\frac{1}{2},1\right)$, the $k$-TPH finds a Steiner tree $S$ such that 
\begin{eqnarray*}
cost(S)  \leq   \left(\ln \frac{mst -\alpha \cdot opt_k}{opt_{k} -\alpha \cdot opt_k}+ 1\right)\cdot \left(opt_{k}-\alpha \cdot opt_k\right) + \alpha \cdot opt_k.
\end{eqnarray*}
and
\begin{eqnarray*}
cost(S) \leq   2\cdot \alpha \cdot opt_k.
\end{eqnarray*}
\end{thm}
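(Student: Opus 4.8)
The plan is to obtain both inequalities as essentially immediate consequences of Theorem~\ref{thm:0001} and Lemma~\ref{lem01}, after first recording why the stated range $\alpha\in\left(\frac{1}{2},1\right)$ is the relevant one. For the range: Lemma~\ref{lem08} applied to $T_{base}^{0}$ (which by construction of the first phase admits no improving $k$-restricted full component) gives $cost(T_{base}^{0})\leq opt_{k}$, while Lemma~\ref{lem01} together with the fact that $S_{1}$ is itself a $k$-restricted Steiner tree (so $cost(S_{1})\geq opt_{k}$) gives $cost(T_{base}^{0})\geq\frac{1}{2}cost(S_{1})\geq\frac{1}{2}opt_{k}$; hence writing $cost(T_{base}^{0})=\alpha\cdot opt_{k}$ with $\alpha\in\left(\frac{1}{2},1\right)$ is legitimate, exactly as noted just before the statement.

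The first bound is then purely a substitution. The right-hand side of the conclusion of Theorem~\ref{thm:0001} is expressed entirely in terms of $mst$, $opt_{k}$, and $cost(T_{base}^{0})$; replacing every occurrence of $cost(T_{base}^{0})$ by $\alpha\cdot opt_{k}$ reproduces verbatim the first displayed inequality, so nothing further is needed there.

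The second bound uses the return rule of Algorithm~\ref{Alg1}: the output $S$ is the cheaper of $S_{1}$ and $S_{2}$, hence $cost(S)\leq cost(S_{1})$. Lemma~\ref{lem01} rearranges to $cost(S_{1})\leq 2\cdot cost(T_{base}^{0})$, and substituting $cost(T_{base}^{0})=\alpha\cdot opt_{k}$ gives $cost(S)\leq cost(S_{1})\leq 2\alpha\cdot opt_{k}$, as claimed.

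There is no real obstacle in this theorem; all the substantive work lives upstream, in the relative-greedy analysis of the second phase (Theorem~\ref{thm:0001}) and the loss-contraction bound of the first phase (Lemma~\ref{lem01}). The only point worth a second look is the justification $cost(S_{1})\geq opt_{k}$ used to pin down $\alpha>\frac{1}{2}$: this holds because every full component produced in the first phase is $k$-restricted, so $S_{1}$ is a feasible $k$-restricted Steiner tree. With that in hand, the two displayed bounds follow by inspection.
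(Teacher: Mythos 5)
Your proposal is correct and follows essentially the same route as the paper: the first bound is the substitution $cost(T_{base}^{0})=\alpha\cdot opt_{k}$ into Theorem~\ref{thm:0001}, and the second follows from Lemma~\ref{lem01} via $cost(S)\leq cost(S_{1})\leq 2\cdot cost(T_{base}^{0})$. Your extra remarks justifying the range of $\alpha$ merely spell out what the paper states in the sentence preceding the theorem.
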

\begin{proof}
From Theorem \ref{thm:0001}, we have
\begin{eqnarray*}
cost(S)  \leq   \left(\ln \frac{mst -\alpha \cdot opt_k}{opt_{k} -\alpha \cdot opt_k}+ 1\right)\cdot \left(opt_{k}-\alpha \cdot opt_k\right) + \alpha \cdot opt_k.
\end{eqnarray*}
According to Lemma~\ref{lem01}, $cost(S)  \leq 2\cdot cost(T_{base}^{0}) =2\cdot \alpha \cdot opt_k$.
\end{proof}

\section{Performance of the $k$-TPH in general graphs}
The following corollaries gives a bound on the cost of the Steiner tree generated by $k$-TPH. 
\begin{coro}\label{thm0006}
The $k$-TPH has an approximation ratio of at most $1.4295$.
\end{coro}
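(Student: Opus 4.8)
The plan is to feed the two upper bounds of Theorem~\ref{thm:0003} into a one-parameter optimization over $\alpha = cost(T_{base}^{0})/opt_k \in (\tfrac12,1)$, after normalizing everything against $opt$. First I would invoke the two standard facts that are available outside the paper's machinery: the minimum terminal-spanning tree satisfies $mst \le 2\cdot opt$, and the Borchers--Du bound~\cite{Borchers97} gives $opt_k \le \rho_k\cdot opt$ with $\rho_k \le 1+\lfloor\log_2 k\rfloor^{-1}$, so choosing $k$ a large enough constant makes $opt_k \le (1+\epsilon)opt$ for arbitrarily small $\epsilon$; I would carry this $\epsilon$ along (or simply identify $opt_k$ with $opt$ for the clean statement). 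Combining with $mst \le 2\,opt \le 2\,opt_k$ then gives $mst/opt_k \le 2$.

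Next I would set $\alpha = cost(T_{base}^{0})/opt_k$ and $\beta = mst/opt_k \le 2$, so that Theorem~\ref{thm:0003} reads
\[
\frac{cost(S)}{opt_k} \le 1 + (1-\alpha)\ln\frac{\beta-\alpha}{1-\alpha} =: g_1(\alpha,\beta), \qquad \frac{cost(S)}{opt_k} \le 2\alpha =: g_2(\alpha).
\]
Since $g_1$ is increasing in $\beta$ and $\beta \le 2$, the worst case is $\beta = 2$, giving $g_1(\alpha) = 1+(1-\alpha)\ln\frac{2-\alpha}{1-\alpha}$. A one-line computation (substitute $u = 1-\alpha$) shows $g_1$ is strictly decreasing on $(\tfrac12,1)$, with $g_1(\tfrac12) = 1+\tfrac12\ln 3 \approx 1.549$ and $g_1(\alpha)\to 1$ as $\alpha\to 1$, while $g_2$ increases from $1$ to $2$. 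Hence $\min\{g_1(\alpha),g_2(\alpha)\}$ is maximized at the unique crossing point $\alpha^\star$ solving $g_1(\alpha^\star)=g_2(\alpha^\star)$, i.e.
\[
\ln\frac{2-\alpha^\star}{1-\alpha^\star} = \frac{2\alpha^\star-1}{1-\alpha^\star},
\]
and the approximation ratio equals $2\alpha^\star$.

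Finally I would certify $2\alpha^\star < 1.4295$ numerically: letting $h(\alpha) = \ln\frac{2-\alpha}{1-\alpha} - \frac{2\alpha-1}{1-\alpha}$, one checks $h$ is strictly decreasing near its root and $h(0.71475) < 0$, whence $\alpha^\star < 0.71475$ and $2\alpha^\star < 1.4295$ (up to the $\epsilon$ coming from $opt_k \le (1+\epsilon)opt$, which is absorbed). The main obstacle is that $\alpha^\star$ is the root of a transcendental equation with no closed form, so the constant $1.4295$ is unavoidably a numerical estimate that must be pinned down by a bracketing argument; a secondary point to verify is that the degenerate cases $cost(T_{base}^{0}) = opt_k$ (where the logarithm vanishes since $(1-\alpha)\ln\frac{2-\alpha}{1-\alpha}\to 0$) and $mst < 2\,opt_k$ only decrease $g_1$, so restricting attention to $\beta = 2$ and $\alpha\in(\tfrac12,1)$ is legitimate.
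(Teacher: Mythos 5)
Your proposal is correct and follows essentially the same route as the paper: plug $mst\le 2\cdot opt$ and $\rho_k\to 1$ (Borchers--Du) into the two bounds of Theorem~\ref{thm:0003}, observe that one bound is decreasing and the other increasing in $\alpha$, and locate the crossing point $\alpha^\star\approx 0.7147$ giving ratio $2\alpha^\star\approx 1.4295$. Your treatment is in fact slightly more careful than the paper's (explicit monotonicity in $\beta$, and a bracketing argument for the transcendental root rather than a bare numerical claim), but the substance is identical.
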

\begin{proof}
We have $mst \leq 2 \cdot opt$ (see~\cite{Takahashi80}). Theorem~\ref{thm:0003} yield
\begin{eqnarray*}
\frac{cost(S)}{opt} & \leq &  \left(\ln \frac{2 \cdot opt -\alpha \cdot opt_k}{opt_{k} -\alpha \cdot opt_k}+ 1\right)\cdot \left(1-\alpha\right)\frac{opt_{k}}{opt} + \alpha \cdot \frac{opt_{k}}{opt} \\
                    &  =   & \left(\ln \frac{\frac{2}{\rho_k} -\alpha}{1 -\alpha}+ 1\right)\cdot \left(1-\alpha\right)\rho_k + \alpha \cdot \rho_k
\end{eqnarray*}
and
\begin{eqnarray*}
\frac{cost(S)}{opt} \leq   2\cdot \alpha \cdot \rho_k,
\end{eqnarray*}
where $\rho_k$ is the worst-case ratio of $\frac{opt_{k}}{opt}$. Borchers and Du~\cite{Borchers97} show that $\rho_k \leq 1+\left\lfloor \log_{2}k\right\rfloor^{-1}$ and $\lim_{k\rightarrow \infty}\rho_k =1$. When $k\rightarrow \infty$, the approximation ratio of the $k$-TPH converges to
\begin{eqnarray*}
A(\alpha) = \left(\ln \frac{2 -\alpha}{1 -\alpha}+ 1\right)\cdot \left(1-\alpha\right) + \alpha.
\end{eqnarray*}
and
\begin{eqnarray*}
B(\alpha) =   2\cdot \alpha.
\end{eqnarray*}
Since $A(\alpha)$ is decreasing in $\alpha$ and $B(\alpha)$ is increasing in $\alpha$, solving $A(\alpha)=B(\alpha)$ yeilds $\alpha^*\approx 0.7147$. The $k$-TPH has an approximation ratio of at most $A(\alpha^*)\approx 1.4295$.
\end{proof}


\end{document}